\title{Constant Factor Approximation\\ for Capacitated $k$-Center with
Outliers\footnote{This work is partially supported by Foundation for Polish Science grant HOMING PLUS/2012-6/2.} }
\author{Marek Cygan}
\author{Tomasz Kociumaka}
\affil{Institute of Informatics, University of Warsaw, Poland\\
  \texttt{[cygan, kociumaka]@mimuw.edu.pl}}
\date{}
\newcommand{\F}{\mathcal{F}}
\newcommand{\C}{\mathcal{C}}
\newcommand{\pint}{\mathbb{Z}_{\ge 0}}
\newcommand{\preal}{\mathbb{R}_{\ge 0}}
\newcommand{\sub}{\subseteq}
\newcommand{\floor}[1]{\left\lfloor #1 \right\rfloor}
\newcommand{\sm}{\setminus}
\newcommand{\fullsup}{\textsc{Capacitated} $k$-\textsc{supplier with Outliers}}
\DeclareMathOperator{\argmax}{argmax}
\newcommand{\fullcen}{\textsc{Capacitated} $k$-\textsc{center with Outliers}}
\theoremstyle{plain}
\newtheorem{theorem}{Theorem}
\newtheorem{lemma}[theorem]{Lemma}
\newtheorem{corollary}[theorem]{Corollary}
\newtheorem{fact}[theorem]{Fact}
\theoremstyle{definition}
\newtheorem{definition}[theorem]{Definition}
\newcommand{\defproblem}[4]{
  \vspace{1mm}
\noindent\fbox{
  \begin{minipage}{0.96\textwidth}
  #1 \\
  {\bf{Input:}} #2  \\
  {\bf{Find:}} #3 
  {\bf{Minimize:}} #4
  \end{minipage}
  }
  \vspace{1mm}
}
\begin{document}

\maketitle

\begin{abstract}
The $k$-center problem is a classic facility location problem,
where given an edge-weighted graph $G = (V,E)$ one is to find a subset 
of $k$ vertices $S$, such that each vertex in $V$ is ``close''
to some vertex in $S$. 
The approximation status of this basic problem is well understood,
as a simple $2$-approximation algorithm is 
known to be tight. 
Consequently different extensions were studied.

In the capacitated version of the problem each vertex is assigned
a capacity, which is a strict upper bound on the number of clients
a facility can serve, when located at this vertex.
A constant factor approximation for the capacitated $k$-center was obtained
last year by Cygan, Hajiaghayi and Khuller~[FOCS'12], which
was recently improved to a $9$-approximation
by An, Bhaskara and Svensson~[arXiv'13].

In a different generalization of the problem some
clients (denoted as outliers) may be disregarded.
Here we are additionally given an integer $p$ and the 
goal is to serve exactly $p$ clients, which the algorithm
is free to choose.
In 2001 Charikar et al. [SODA'01] presented a $3$-approximation
for the $k$-center problem with outliers.

In this paper we consider a common generalization of the two
extensions previously studied separately, i.e.\ we work
with the capacitated $k$-center with outliers. We present the first constant factor approximation algorithm
with approximation ratio of $25$ even for the case of non-uniform hard capacities.
\end{abstract}

\section{Introduction}

The $k$-center problem is a classic facility location problem and
is defined as follows: given a finite set $V$ and a symmetric distance (cost)
function $d: V\times V \to \preal$ satisfying the triangle inequality,
find a subset $S \subseteq V$ of size $k$ such that each vertex
in $V$ is ``close'' to some vertex in $S$. More formally, once we choose $S$ 
the objective function to be minimized is $ \max_{v \in V} \min_{u \in S}
d(v,u)$.
The vertices of $S$ are called \emph{centers} or \emph{facilities}.
The problem is known to be NP-hard \cite{GJ}.  Approximation algorithms for the 
$k$-center problem have been well studied and are known to be optimal \cite{Gonzalez,HS1,HS2,HN}. 

In the capacitated setting, studied for twenty years already,
we are additionally given a capacity function
$L:V \to \pint$ and no more than $L(u)$ vertices (called \emph{clients}) may be assigned
to a chosen center at $u \in V$.
For the special case when all the capacities are {\em identical} 
(denoted as the \emph{uniform} case), a $6$-approximation was developed by
Khuller and Sussmann~\cite{KS} improving the previous bound of $10$ by Bar-Ilan, Kortsarz and Peleg \cite{BKP}.
In the \emph{soft} capacities version, in contrast to the standard (\emph{hard} capacities), we are
allowed to open several facilities in a single location, i.e.\ the facilities may
form a multiset.
For the uniform soft capacities version the best known approximation ratio equals $5$~\cite{KS}.
For general hard capacities a constant factor approximation
has been obtained only recently~\cite{chk-focs12}, somewhat surprisingly
by using LP rounding.
It was followed by a cleaner and simpler approach of An, Bhaskara and Svensson~\cite{svensson} 
who gave a $9$-approximation algorithm.
From the hardness perspective a $(3-\varepsilon)$ lower bound
on the approximation ratio is known~\cite{JACm,chk-focs12}.

Another natural direction in generalizing the problem
is an assumption that instead of serving all the clients
we are given an integer $p$ and we are to select
exactly $p$ clients to serve. The disregarded clients are
in the literature called \emph{outliers}.
The $k$-center problem with outliers admits a $3$-approximation algorithm,
which was obtained by Charikar et al.~\cite{CKMN}.

In this article we study a common generalization of the two mentioned variants 
of the $k$-center problem, i.e.\ involving both capacities and outliers.
In order to simplify our algorithms we work with a slight generalization,
the \textsc{Capacitated} $k$-\textsc{supplier with Outliers} problem,
where vertices are either clients or potential facility
locations. These vertices may coincide, so that one may have both a client and a
potential facility location at the same point, as in $k$-\textsc{center}. 
Below we give the formal problem definition. 

\defproblem{\textsc{Capacitated} $k$-\textsc{supplier with Outliers}}
{Integers $k,p\in \pint$, finite sets $\C$ and $\F$, a symmetric distance
(cost) function $d : (\C \cup \F) \times (\C \cup \F) \to \preal$ satisfying 
  the triangle inequality, and a capacity function $L: \F \to \pint$} {
Sets $C\sub \C$, $F\sub \F$, and a function $\phi: C \to F$ satisfying
\begin{itemize}
  \item $|C|=p$,
  \item $|F|=k$,
  \item $|\phi^{-1}(u)|\le L(u)$ for each $u\in F$.
\end{itemize}} 
{$\max_{v\in C} d(v, \phi(v))$.}

Again, in the \emph{soft} capacities version, $F$ is allowed to be a multiset,
and in the \emph{uniform} capacities version, the capacity function $L$ is
constant.

Existence of an $r$-approximation algorithm for \textsc{Capacitated} $k$-\textsc{center with Outliers}
can be shown to be equivalent to existence of an $r$-approximation algorithm for \textsc{Capacitated} $k$-\textsc{supplier with Outliers}
(see Appendix~\ref{app:equivalence}).
Interestingly, such an equivalence is not known to hold if we do not allow outliers:
the best known approximation factor for the \textsc{Capacitated} $k$-\textsc{supplier} is 11
while for the \textsc{Capacitated} $k$-\textsc{center} it is 9, see \cite{svensson}.

\subsection{Our results and organization of the paper}

The following is the main result of this paper.

\begin{theorem}
The \fullsup\ 
problem, both in hard and soft
capacities version, admits a 25-approximation algorithm.
The hard uniform capacities version admits a 23-approximation, and soft uniform
capacities -- a 13-approximation.
\end{theorem}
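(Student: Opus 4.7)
The plan is to reduce the problem to a sequence of decision problems by parametric search over the $O(|\C|\cdot|\F|)$ candidate values of the optimum radius $r^\star$, each of which is a pairwise distance in the instance, as is standard for $k$-center--type problems. For a guessed threshold $r$, the task becomes: either certify that no feasible solution of radius $\le r$ exists, or output a feasible solution of radius at most $25r$; taking the smallest successful $r$ yields the claimed approximation ratio. I would work in the \fullsup\ formulation throughout and invoke the equivalence of Appendix~\ref{app:equivalence} to transfer the result to \fullcen.

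With $r$ fixed, the first substantive step is to build pivots, following the approach common to all known capacitated $k$-center algorithms: greedily pick a maximal family $c_1,\ldots,c_m\in\C$ of clients pairwise at distance $>2r$, so that every client lies within $2r$ of some pivot and, in any radius-$r$ solution, each served pivot is assigned to some facility in $B_\F(c_i,r)$; the $>2r$ separation decouples these ``landing'' regions across clusters. To select which pivots to serve and how many clients to draw from each, I would solve the natural LP relaxation with $y_u\in[0,1]$ for opening facility $u$ and $x_{vu}\in[0,1]$ for assigning client $v$ to $u$, restricted to edges of the threshold graph $d(v,u)\le r$, with constraints $\sum_u y_u\le k$, $\sum_v x_{vu}\le L(u)\, y_u$, $\sum_u x_{vu}\le 1$, and $\sum_{v,u} x_{vu}\ge p$. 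LP infeasibility immediately certifies that no radius-$r$ solution exists; otherwise I would round in two stages: first, within each pivot cluster, collapse the fractional facility openings onto a small integral set of facilities near the pivot while preserving total fractional capacity; second, reassign clients via a bipartite flow on the resulting pivot--facility graph so as to meet the outlier budget and respect capacities.

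I expect the main obstacle to be the interplay between the outlier budget $p$ and the hard capacities in this rounding step. In the uncapacitated outlier setting of Charikar et al.~\cite{CKMN}, a greedy densest-pivot rule suffices because every chosen pivot can absorb its entire nearby client set; with hard capacities, however, the number of clients extractable near a pivot is itself limited by the facilities available there, so a greedy rule can overshoot capacity or leave $p$ unmet. My plan for handling this is to phrase the client-selection substep as integral transportation on the bipartite pivot--facility graph, with supplies and demands derived from rounding the LP marginals, and to appeal to total unimodularity (or a matroid-intersection formulation) for the integrality of the rounded solution. Finally, the approximation factor is obtained by bookkeeping the radius blow-up: a factor of $2$ from the pivot separation, plus an additive $r$ per triangle-inequality hop incurred when collapsing facilities or rerouting clients. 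Carefully chaining these hops should yield the bound $25r$, with the uniform-capacity and soft-capacity variants each saving one hop in either the facility-collapse or client-reroute step and giving the improved constants $23$ and $13$.
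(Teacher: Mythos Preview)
Your plan has a genuine gap: the standard LP relaxation you propose has unbounded integrality gap even on connected instances once outliers are allowed, so no rounding of its solution can succeed in general. The paper exhibits (Appendix~\ref{app:example}) a connected graphic instance where this LP is feasible yet no distance-$r$ integral solution exists for arbitrarily large $r$; your ``collapse then transport'' scheme would be bounded only by this gap, not by any fixed constant. The appeal to total unimodularity or matroid intersection does not help: the constraint matrix combining per-facility capacities $\sum_v x_{uv}\le L(u)y_u$, the global budget $\sum_u y_u\le k$, and the coverage target $\sum_{u,v} x_{uv}\ge p$ is not TU, and no two-matroid structure is apparent.

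The missing idea is what the paper calls a \emph{skeleton}: a well-separated set $S\subseteq\F$ of potential facility locations (not clients) that admits an injection into the facilities of some optimal solution and \emph{covers} that solution within bounded distance. Client-based pivots cannot play this role, because with outliers you have no guarantee that a given client pivot is served at all, so nothing forces a facility near it. The paper instead greedily grows $S\subseteq\F$ by repeatedly adding the highest-capacity far vertex and outputs every intermediate set; Lemma~\ref{lem:ind} shows at least one such set is a skeleton. The skeleton then both strengthens the LP via the extra constraint $\sum_{u\in N^2[s]} y_u\ge 1$ for each $s\in S$ and prunes the instance to $N^5[S]$ before decomposing into components and partitioning $k,p$ by dynamic programming; only with all of this does the LP have bounded gap, after which the tree-based transfer of~\cite{svensson} yields an integral distance-$24$ transfer and hence a distance-$25$ solution. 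Your radius bookkeeping does not match this structure, and the improved constants $23$ and $13$ arise from specific modifications to the transfer chain (dropping one distance-$2$ step in the uniform case; replacing the distance-$20$ tree step by a distance-$10$ one in the soft uniform case), not from generic hop counting.
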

Note that taking $\C=\F=V$ shows that 
the $k$-supplier problem generalizes the $k$-center problem, 
and consequently gives the same approximation bounds for the latter.
\begin{corollary}
The \fullcen\ 
problem, both in hard and soft
capacities version, admits a 25-approximation algorithm. 
The hard uniform capacities version admits a 23-approximation, and soft uniform
capacities -- a 13-approximation.
\end{corollary}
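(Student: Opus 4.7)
The plan is to prove the Corollary by exhibiting a trivial, approximation-preserving reduction from \fullcen\ to \fullsup, and then simply invoking the preceding Theorem as a black box. No new algorithmic ideas are needed; the entire content is that the $k$-center variant is literally a special case of the $k$-supplier variant.

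Concretely, given an arbitrary instance of \fullcen\ specified by a vertex set $V$, a symmetric metric $d$ on $V$, a capacity function $L:V\to\pint$, and integers $k,p$, I will construct an instance of \fullsup\ by setting $\C := V$, $\F := V$, and keeping $d$, $L$, $k$, $p$ unchanged. The distance function trivially remains symmetric and satisfies the triangle inequality because it is inherited from the original instance. Feasible solutions to the two instances are in obvious bijection: in both cases one must choose $F\sub V$ with $|F|=k$, a set $C\sub V$ of size $p$, and an assignment $\phi:C\to F$ respecting the capacities. Moreover, the two objective functions $\max_{v\in C} d(v,\phi(v))$ coincide pointwise on this bijection, so the optimum values agree and, more importantly, any $r$-approximate solution to the constructed \fullsup\ instance is an $r$-approximate solution to the original \fullcen\ instance.

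Applying the main Theorem to the constructed instance therefore yields the claimed approximation ratios of $25$ in the hard (possibly non-uniform) capacities case, $25$ in the soft case, $23$ for hard uniform capacities, and $13$ for soft uniform capacities. The soft/hard and uniform/non-uniform distinctions are preserved verbatim by the reduction because $L$ and the type of the instance are copied unchanged, so each bound of the Theorem transfers directly.

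There is essentially no obstacle here; the only thing worth double-checking is that the reduction respects all four sub-cases of the Theorem (hard vs.\ soft, uniform vs.\ non-uniform) simultaneously, which is immediate since the reduction never alters $L$ or the multiplicity convention on $F$. Hence the Corollary follows at once from the Theorem.
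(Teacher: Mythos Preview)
Your proposal is correct and matches the paper's own justification: the paper simply notes that taking $\C=\F=V$ exhibits \fullcen\ as a special case of \fullsup, so the approximation bounds of the preceding theorem transfer verbatim.
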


It is worth noting, that the already known approximation algorithm
for the $k$-center problem with outliers relies on the fact 
that a single vertex can serve all the clients that are its neighbors,
i.e.\ there are no capacity constraints.
At the same time the previous approximation algorithms
for the capacitated $k$-center problem (both in the uniform
and non-uniform case) heavily used the fact that 
each vertex of the graph is close to some center in any solution.
For this reason it was possible to create a path-like~\cite{chk-focs12}
or tree-like~\cite{svensson} structure
with integrally opened non-leaf vertices, that was the crux in
the rounding process. 
Consequently none of the algorithms for the two previously independently
studied extensions of the basic problem, i.e.\ capacities and outliers,
works for the problem we are interested in.

The first step of our algorithm (Section~\ref{sec:thresholding}) is the standard
thresholding technique, where we reduce a general metric to a distance metric of an unweighted graph.
In Section~\ref{sec:skeleton} we introduce our main conceptual
contribution, i.e.\ the notion of a {\em skeleton}.
A skeleton is a set $S$ of vertices, for which
there exists an optimum solution $F \subseteq \F$, 
such that each vertex of $S$ can be injectively mapped to a
nearby vertex of $F$ and moreover each vertex of $F$
is close to some vertex of $S$.
Intuitively a skeleton is not yet a solution, but it looks
similar to at least one optimum solution.
If no outliers are allowed, any inclusion-wise maximal subset of $\F$ with vertices
far enough from each other, is a skeleton.  In~\cite{chk-focs12} and \cite{svensson},
such a set is then mapped to non-leaf vertices of the structure steering the rounding process.
We use a skeleton in a similar way, but before we are able to do that,
we need to bound the integrality gap. Without outliers, it was sufficient to take the
 standard LP relaxation and decompose the graph into connected components. 
Although with outliers this is no longer the case,
as shown in Section~\ref{sec:clustering}, a skeleton lets us both strengthen the LP relaxation, adding an appropriate constraint,
and obtain a more granular decomposition of the initial instance into several
subinstances, for which the strengthened LP relaxation is feasible and has bounded integrality gap.
Further in Section~\ref{sec:rounding} we show how each of these
smaller instances can be independently rounded using tools previously applied for the capacitated setting~\cite{svensson}.\footnote{The final
rounding step can be also done using the path-like structures notion of~\cite{chk-focs12},
however we use the ideas of~\cite{svensson} as it allows cleaner presentation.}
Section~\ref{sec:wrap-up} contains a wrap-up of the whole algorithm.
The improvements in the approximation ratio when soft or uniform capacities are considered, are presented in Appendix~\ref{app:improvements}. 

\subsection{Related facility location work}

The {\em facility location} problem is a central problem in operations
research and computer science and has been a testbed for many new
algorithmic ideas resulting a number of different
approximation algorithms. 
In this problem,  given a metric (via a weighted graph $G$), a set of nodes called {\em clients},  and opening costs on some nodes called {\em facilities}, the goal is to open a subset of facilities such that the sum of their opening costs and connection costs of clients to their nearest open facilities is minimized. 
Up to now, the best known approximation ratio is 1.488, due to Li~\cite{Li11}
who used a randomized selection in Byrka's algorithm \cite{Byr07}.  Guha and Khuller~\cite{GK} showed that this problem is hard to approximate within a factor better than 1.463, assuming $NP
\not\subseteq DTIME\big[n^{O(\log \log n)}\big]$.

When the facilities have capacities, the problem is called the {\em capacitated
facility location} problem. It has also received a great deal of
attention in recent years. Two main variants of the problem are
soft-capacitated facility location and hard-capacitated facility
location: in the latter problem, each facility is either opened at
some location or not, whereas in the former, one may specify any
integer number of facilities to be opened at that location. Soft
capacities make the problem easier and by modifying approximation
algorithms for the uncapacitated problems, we can also handle this
case~\cite{STA,JV}.
To the best of our knowledge all the existing constant-factor approximation
algorithms for the general case of hard capacitated facility location
are local search based, and the most recent of them is the $5$-approximation
algorithm of Bansal, Garg and Gupta~\cite{bgg12}.
The only LP-relaxation based approach for this problem is due to Levi, Shmoys
and Swamy~\cite{LSS04} who gave a 5-approximation algorithm for the
special case in which all facility opening costs are equal
(otherwise the LP does not have a constant integrality gap).
Obtaining an LP based constant factor approximation algorithm
for capacitated facility location is considered a major problem in
approximation algorithms~\cite{sw-book}.

A problem very close to both facility location and $k$-center is the {\em $k$-median} problem in which we want to open at most $k$ facilities 
and the goal is to minimize the sum of connection costs of clients to their nearest open facilities.
Very recently Li and Svensson~\cite{li-svensson} obtained an LP rounding $(1+\sqrt{3})$-approximation algorithm,
improving upon the previously best $(3+\varepsilon)$-approximation local search
algorithm of Arya et al.~\cite{kmed3}.
Unfortunately obtaining a constant factor approximation algorithm for capacitated $k$-median still remains open despite consistent effort. 
The only previous attempts with  constant approximation factors for this problem violate the capacities within a constant 
factor for the uniform capacity case~\cite{CGTS} and the non-uniform capacity case~\cite{CR05} or exceed the number $k$ of facilities by a constant factor~\cite{BCR01}. 

\section{Preliminaries}

For a fixed instance of the \fullsup, we call $(C,F,\phi)$
\emph{a solution} if it satisfies the required conditions. We often identify the
solution by $\phi$ only (considering it as a partial function from $\C$ to $\F$),
using $C_\phi$ and $F_\phi$ to refer to the other elements of the triple.
If $\phi$ satisfies $\max_{v\in C} d(v, \phi(v))\le \tau$, we say that $\phi$ is
a distance-$\tau$ solution.

Let $G=(V,E)$ be an undirected graph. By $d_G$ we denote the metric defined by
$G$.
For sets $A,B\sub V$ we define $d_G(A,B) = \min_{a\in A, b\in
B} d_G(a,b)$. If $B=\{b\}$ we write $d_G(A,b)$ instead of $d_G(A,B)$.

For a vertex $v\in V$ and an integer $k\in \pint$ we denote
$N^k_G(v)=\{u \in V : d_G(u,v)=k\}$ and $N^k_G[v]=\{u \in V : d_G(u,v)\le k\}$.
We omit the superscript for $k=1$ and the subscript if there is no confusion
which graph we refer to.

For a set $S$ and an element $s$ by $S+s$ we denote $S\cup \{s\}$.

\section{Reduction to graphic instances}
\label{sec:thresholding}

As usual when working with a min max problem we start with the standard
thresholding argument, i.e.\ reduce a general metric function to a metric defined by an unweighted graph. 

We say that an instance of the $k$-supplier problem is \emph{graphic},
if $d$ is defined as the distance function of an unweighted bipartite graph
$G=(\C,\F,E)$, and the goal is to find a distance-1 solution. 
An $r$-approximation algorithm is then allowed to either give a distance-$r$
solution, or, only if it finds out that no distance-1 solution
exists, a NO answer.

Below we show how to build an $r$-approximation algorithm for \fullsup\ given an
$r$-approximation (in the aforementioned sense) for the graphic instances.
Correctness of the reduction is standard. If an optimal solution exists, then
its value $OPT$ belongs to $T$. In particular, in the phase corresponding to $OPT$,
 there is a distance-1 solution in $G_{\le OPT}$. Thus the algorithm for
graphic instances is required to find a solution. Therefore returns a
solution $\phi$ for the first time at phase corresponding to $\tau^*\le OPT$.
Since $d(v,u)\le\tau^* d_{G_{\le \tau^*}}(v,u)$, $\phi$ is a distance-$r\cdot\tau^*$
solution, hence also distance-$r\cdot OPT$ solution.

\begin{algorithm}
\caption{Reduction to graphic instances}\label{alg:red}
$T:=\{d(v,u) : v\in \C, u\in \F\}$\;
\ForEach{$\tau\in T$ in ascending order}{
	$G_{\le \tau} := (\C,\F,\{(v,u): d(v,u)\le \tau\})$\;
	solve the graphic instance for $G_{\le \tau}$\;
	\lIf{a solution $\phi$ found}{\Return $\phi$\;}
}
\Return {NO}\;
\vspace{.2cm}
\end{algorithm} 
\section{Finding a skeleton}
\label{sec:skeleton}
From now on we work with graphic instances only.
Without loss of generality we may assume that $L(u)\le deg(u)$ for each $u\in
\F$.  Indeed, setting
 $L(u) := \min(L(u),deg(u))$ has no influence on distance-1 solutions, while no
 additional distance-$r$ solutions are created.

The first phase of the algorithm outputs several subsets of $\F$.
If a distance-1 solution exists, at least one of them resembles (in a certain
sense, to be defined later) a distance-1 solution and can be successfully used
by the subsequent phases as a hint for constructing a distance-$r$ solution. 
We formalize the features of a good hint in the following definition.
\begin{definition}
A set $S \sub \F$ is called a \emph{skeleton} if
\begin{itemize}
  \item ({\bf separation property}) $d(u,u')\ge 6$ for any $u,u'\in S$, $u \neq u'$,
  \item there exists a distance-1 solution $(C_\phi, F_\phi, \phi)$ such that:
  \begin{itemize}
    \item ({\bf covering property}) $d(u, S)\le 4$ for each $u\in F_\phi$,
    \item ({\bf injection property}) there exists an injection $f: S \hookrightarrow F_\phi$ 
    satisfying $d(u, f(u))\le2$ for each $u \in S$.
  \end{itemize}
\end{itemize}
If just separation and injection properties are satisfied, we call $S$ a \emph{preskeleton}.
\end{definition}

In other words a skeleton is a set $S$, each vertex of which
can be injectively mapped to a vertex of a distance-1 solution $F_\phi$,
and at the same time no two vertices of $S$ are close
and $N^4[S]$ contains the whole set $F_\phi$.

Note that the separation property implies that sets $N^2[u]$ are pairwise disjoint for
$u\in S$, hence any function $f:S \to F_\phi$ satisfying $d(u, f(u))\le2$ 
is in fact an injection, however we make it explicit for the sake of presentation.

\begin{lemma}\label{lem:ind}
Let $S$ be a preskeleton and let $U=\{u\in \F : d(u,S)\ge 6\}$.
Then $S$ is a skeleton, or $U\ne \emptyset$ and $S+s$ is a preskeleton,
where $s$ is a highest-capacity vertex of $U$.
\end{lemma}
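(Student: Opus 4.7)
The key observation I would start from is a parity phenomenon. Since $G$ is bipartite with $S \subseteq \F$, every walk between two $\F$-vertices alternates sides and has even length, so $d(u, S)$ is even for every $u \in \F$. Consequently $U = \{u \in \F : d(u, S) \geq 6\}$ coincides with $\{u \in \F : d(u, S) \geq 5\}$, which bridges the gap between the covering threshold of $4$ (distance $\leq 4$) and the separation threshold of $6$ (distance $\geq 6$).

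Assuming $S$ is not a skeleton, I would pick a distance-$1$ solution $\phi$ witnessing the preskeleton property, with injection $f : S \hookrightarrow F_\phi$. Since $S$ is not a skeleton, $\phi$ fails the covering property, so some $v^* \in F_\phi$ satisfies $d(v^*, S) \geq 5$; parity immediately upgrades this to $d(v^*, S) \geq 6$, so $v^* \in U$ and $U \neq \emptyset$. I would then set $s := \argmax_{u \in U} L(u)$ and verify that $S+s$ is a preskeleton. Separation holds immediately ($s \in U$ forces $d(s, u') \geq 6$ for all $u' \in S$), so only the injection property remains. For that I would split on whether $d(s, F_\phi) \leq 2$ or $d(s, F_\phi) \geq 4$ (by parity these are the only options).

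In the easy case $d(s, F_\phi) \leq 2$, pick $v \in F_\phi$ with $d(s, v) \leq 2$. Then $d(v, S) \geq d(s, S) - 2 \geq 4$, so $v \notin f(S)$, and $\phi' := \phi$ together with $f' := f \cup \{s \mapsto v\}$ is the required witness. The hard case is $d(s, F_\phi) \geq 4$: here I would observe that every neighbor $c$ of $s$ in $\C$ satisfies $d(c, F_\phi) \geq 3 > 1$, so $c \notin C_\phi$, i.e., every neighbor of $s$ is an outlier of $\phi$. I would then build $\phi'$ by swapping $v^*$ out and $s$ in: drop the $|\phi^{-1}(v^*)|$ assignments to $v^*$, add $s$ to the facility set, and assign $|\phi^{-1}(v^*)|$ neighbors of $s$ as new clients of $s$. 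This is legal because of the capacity chain $|\phi^{-1}(v^*)| \leq L(v^*) \leq L(s) \leq \deg(s)$, which holds because $v^* \in U$, $s$ maximizes $L$ over $U$, and we may assume $L \leq \deg$. Both $v^*$ and $s$ lie at distance $\geq 6$ from $S$ and therefore outside $f(S)$, so $f' := f \cup \{s \mapsto s\}$ is an injection into $F_{\phi'} = (F_\phi \setminus \{v^*\}) \cup \{s\}$.

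The main obstacle is the hard case: the construction requires simultaneously that $s$'s neighborhood be available for reassignment (guaranteed by the distance gap forcing them to be outliers of $\phi$) and that the capacity swap be feasible (guaranteed by choosing $s$ to maximize $L$ over $U$ together with $v^* \in U$). The parity observation at the start is what ties everything together: without it, a vertex at distance exactly $5$ from $S$ could witness failure of covering while escaping $U$, and the entire approach would collapse.
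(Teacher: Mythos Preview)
Your proof is correct and follows essentially the same approach as the paper: the parity observation, the case split on whether $N^2[s]\cap F_\phi$ is nonempty, and the swap construction (close $v^*$, open $s$, reassign $|\phi^{-1}(v^*)|$ neighbors of $s$) with the capacity chain $|\phi^{-1}(v^*)|\le L(v^*)\le L(s)\le \deg(s)$ are exactly what the paper does. If anything, you are slightly more explicit than the paper in two places---you spell out why the target of the extended injection is not already in $f(S)$ in the easy case, and you note that $v^*\notin f(S)$ so that $f$ still maps into $F_{\phi'}$ after $v^*$ is removed.
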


\begin{proof}
Let $\phi$ be a distance-1 solution, which witnesses $S$ being a preskeleton,
where $f : S \hookrightarrow F_\phi$ satisfies the injection property.
If $\phi$ witnesses $S$ being a skeleton, we are done.
Otherwise the covering property is not satisfied,
hence there exists $u\in F_\phi$ such that $d(u,S)>4$.
Since $d$ is a distance function of a bipartite graph, 
this implies $d(u,S)\ge 6$, so $u\in U \ne \emptyset$.
If $|F_\phi \cap N^2[s]|\ge 1$, then $\phi$ already witnesses $S+s$ being a
preskeleton, as one can extend the injection $f$ by mapping 
a vertex of $F_\phi \cap N^2[s]$ to $s$.
Therefore, we may assume that $N^2[s]\cap F_\phi = \emptyset$.
In particular, this means that the clients in $N(s)$ are not served by any
facility of $F_\phi$.

Let us modify $\phi$ to obtain $\psi$ as follows:
close the facility in $u$, opening one in $s$ instead.
Let $c$ be the number of clients assigned to $u$ in $\phi$.
No longer serve these, instead serve any $c$ neighbors of $s$ in
$\psi$ (as we have observed before, they are not served in $\phi$).
Note that $c\le L(u)\le L(s)\le \deg(s)$ by the choice of $u$ maximizing the
capacity and by the assumption of $L$ being bounded by $\deg$.
Consequently, there are enough neighbors of $s$ to serve, and the capacity
constraint for $s$ is satisfied.
Moreover, the number of open facilities and the number of served clients are
preserved.
Other open facilities remain unchanged, so $\psi$ satisfies
the capacity and distance constraints for them, and therefore is a distance-1 solution.
Finally, consider a function $f'=f + (s, s)$.
As $s$ is at distance at least $6$ from $S$, by the 
injection property for $S$ we know that $s$ does not belong to the image of $f$,
hence $f'$ is an injection.
Consequently $\psi$ and $f'$ ensure $S+s$ satisfies the injection property.
Moreover $s$ is far from $S$, hence $S+s$ is a preskeleton.
\end{proof}

With $\emptyset$ being trivially a preskeleton provided that any distance-1
solution exists, Lemma~\ref{lem:ind} lets us generate a sequence of
sets, which contains a skeleton (see Algorithm~\ref{alg:pre}).
Note that any skeleton, by the injection property, is of size at most $k$.

\begin{lemma}
If there exists a distance-1 solution, there is at least one skeleton among sets
output by Algorithm~\ref{alg:pre}.
\end{lemma}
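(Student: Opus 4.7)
My plan is to prove the lemma by tracing an induction along the sequence of sets $S_0 \subsetneq S_1 \subsetneq \cdots$ produced by Algorithm~\ref{alg:pre}, invoking Lemma~\ref{lem:ind} at each step. I will first verify the base case. Under the assumption that a distance-1 solution $\phi$ exists, the empty set $\emptyset$ vacuously satisfies both the separation property and the injection property (with the empty injection into $F_\phi$), so $\emptyset$ is a preskeleton. This matches the initialization of the algorithm.

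For the inductive step I would argue that every set $S_i$ in the sequence is a preskeleton. Given that $S_i$ is a preskeleton, Lemma~\ref{lem:ind} yields a dichotomy: either $S_i$ is already a skeleton (so the claim is proved), or else $U_i := \{u\in \F : d(u,S_i)\ge 6\}$ is nonempty and $S_{i+1} := S_i + s_i$ is again a preskeleton, where $s_i$ is a highest-capacity vertex of $U_i$. This is precisely the extension the algorithm performs, so the inductive invariant propagates to $S_{i+1}$.

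Finally I would handle termination. Since $|S_{i+1}| = |S_i| + 1$ and each $S_i \subseteq \F$ is finite (in fact, the injection property bounds $|S_i|$ by $k$), the construction must halt after finitely many steps. The only way for the algorithm to halt is for $U_i$ to become empty; but then the second branch of Lemma~\ref{lem:ind} is unavailable, forcing the first branch, so the final $S_i$ is a skeleton. The only real subtlety is to observe that $\emptyset$ can indeed serve as a preskeleton to seed the induction, and that the dichotomy of Lemma~\ref{lem:ind} collapses to the skeleton conclusion precisely when the extension step is no longer possible—hence termination of the algorithm automatically produces a skeleton within the output sequence.
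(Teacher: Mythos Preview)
Your argument is correct and matches the paper's own justification, which is given only as a one-sentence remark before the lemma: the empty set is trivially a preskeleton, Lemma~\ref{lem:ind} supplies the inductive step, and the injection property caps every preskeleton at size $k$. One small wording issue: it is not literally true that ``the only way for the algorithm to halt is for $U_i$ to become empty,'' since the while-loop also exits when $|S|=k$; but the ingredient you already cite handles this case too, because if $S_k$ were a non-skeleton preskeleton then Lemma~\ref{lem:ind} would yield a preskeleton of size $k+1$, contradicting the injection bound.
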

\begin{algorithm}
\caption{Construction of a family of sets containing at least one skeleton.}\label{alg:pre}
$S := \emptyset$\;
\While{$|S|\le k-1$}{
	$U := \{u\in \F \; : \; d(u,S) \ge 6\}$\;
	\lIf {$U = \emptyset$}{\KwSty{break}}\;
	$s := \textrm{argmax}\{ L(u) \; : \; u \in U\}$\;
	$S := S + s$\;
	\KwSty{output} $S$\;
}
\vspace{.2cm}
\end{algorithm} 

\section{Clustering}
\label{sec:clustering}

For a set $S \subseteq \F$ define the following linear program
$LP_{k,p}(G,L,S)$, where a variable $y_u$ for $u \in \F$ denotes whether
we open a facility in $u$ or not, while
a variable $x_{uv}$ for $u \in \F$, $v \in \C$
corresponds to whether $u$ serves $v$ or not.

\begin{alignat}{9}
\label{lp:1} \sum_{u\in \F} y_u &= k & &\\
\label{lp:2}\sum_{u\in \F, v\in \C} x_{uv} &= p & &\\
\label{lp:3}x_{uv} &\le y_u & & \text{for each $u\in \F, v\in \C$}\\
\label{lp:4}\sum_{v} x_{uv} & \le L(u)\cdot y_u \quad & & \text{for each $u\in \F$}\\
\label{lp:5}\sum_{u} x_{uv} & \le 1 & & \text{for each $v\in \C$} \\
\label{eq:near}\sum_{u\in \F \cap N^2[s]} y_{u} &\ge 1 & & \text{for each $s\in S$}\\
\label{eq:forb}x_{uv} &= 0 & &\text{for each $u\in \F, v\in \C$ such that $(v,u)\notin E$}\\
\mathbf{0} \le x,y &\le \mathbf{1} & &
\end{alignat}

Constraints $(\ref{lp:1})-(\ref{lp:5}),(\ref{eq:forb})$
are the standard constraints for \fullsup,
ensuring that we open exactly $k$ facilities (\ref{lp:1}), 
serve exactly $p$ clients (\ref{lp:2}),
obey capacity constraints (\ref{lp:3})-(\ref{lp:5}),
and serve clients which are close to facilities (\ref{eq:forb}).

Observe that if $S$ is a skeleton and a distance-1 solution $\phi$ witnesses
that fact, we get a feasible solution of $LP_{k,p}(G,L,S)$ setting $y_u = 1$
iff $u\in F_\phi$ and $x_{uv}=1$ iff $v\in C_\phi$ and $v=\phi(u)$.
Indeed the injection property ensures that constraint (\ref{eq:near}) is satisfied.
However, as usual in a capacitated problem with hard constraints, 
the integrality gap of this LP is unbounded.
Similarly to the standard \textsc{capacitated} $k$-\textsc{center}~\cite{chk-focs12},
this issue is addressed by considering the connected components of $G$
separately. 
When all the clients need to be served having a connected graph
with a feasible solution of the standard LP is enough to round it~\cite{svensson,chk-focs12}.
However, if we allow outliers, there are sill connected instances with arbitrarily large integrality gap (a simple construction
is presented in Appendix~\ref{app:example}).
For this reason we use the additional constraint~(\ref{eq:near})
together with the assumption that all the vertices are close to $S$.
This way we crucially exploit the 
covering, injection and separation properties of a skeleton.

In the following we shall prove that any instance 
with a skeleton can be decomposed into several smaller instances
with additional properties. In the next section we will show how
to round the obtained smaller instances.

\begin{lemma}\label{lem:dp}
Let $S\sub \F$, let $G_1,\ldots,G_\ell$ be components
of $G$ after all vertices $v$ with $d(v,S)>5$ are removed
and let $S_i=S\cap V(G_i)$ for $1 \le i \le \ell$.

If $S$ is a skeleton, then in polynomial time one can find partitions $k=\sum_{i=1}^\ell k_i$
and $p=\sum_{i=1}^\ell p_i$ such that $LP_{k_i,p_i}(G_i,L,S_i)$ are all feasible.
\end{lemma}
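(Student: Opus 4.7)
The plan is to first show the partitions exist by exploiting the distance-$1$ solution $\phi$ guaranteed by the skeleton definition, and then to find them algorithmically via a straightforward dynamic program.

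Let $\phi$ be the witnessing distance-$1$ solution and $f \colon S \hookrightarrow F_\phi$ the injection. I would set $k_i := |F_\phi \cap V(G_i)|$, $p_i := |C_\phi \cap V(G_i)|$ and $\phi_i := \phi|_{C_\phi \cap V(G_i)}$, arguing that $\phi_i$ yields an integral feasible solution to $LP_{k_i,p_i}(G_i, L, S_i)$. By the covering property every $u \in F_\phi$ satisfies $d(u,S) \le 4$, so $u$ is not removed; and every served client $v$ satisfies $d(v,S) \le d(v,\phi(v)) + d(\phi(v),S) \le 1 + 4 = 5$, so $v$ is also not removed and, being adjacent to $\phi(v)$, lies in the same component as $\phi(v)$. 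Consequently $\sum_i k_i = k$, $\sum_i p_i = p$, and the capacity, assignment, and adjacency constraints (\ref{lp:1})--(\ref{lp:5}),(\ref{eq:forb}) localize trivially.

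The only constraint that needs work is (\ref{eq:near}): for each $s \in S_i$ some facility of $F_i := F_\phi \cap V(G_i)$ must lie in $N^2[s]$. The injection provides $f(s) \in F_\phi$ with $d_G(s,f(s)) \le 2$, and the remaining point is that $f(s) \in V(G_i)$. Here bipartiteness of $G$ is essential: since $s, f(s) \in \F$ their distance is $0$ or exactly $2$, and in the latter case an intermediate vertex $w \in \C$ on the path $s - w - f(s)$ satisfies $d(w,S) \le 1$, hence survives the removal. This yields a path from $s$ to $f(s)$ in the surviving graph, certifying $f(s) \in V(G_i)$, which is precisely the reason the removal threshold has to be $5$ and not $4$.

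For the algorithmic side, since $\phi$ is unknown I would enumerate: for each component $G_i$ and each pair $(k_i,p_i)$ with $0 \le k_i \le k$, $0 \le p_i \le p$, solve $LP_{k_i,p_i}(G_i,L,S_i)$ to check feasibility. This amounts to $O(\ell k p)$ polynomial-size LPs. A standard two-dimensional knapsack-style DP over components, with state $(i,k',p')$ recording whether the first $i$ components can be assigned feasible pairs summing to $(k',p')$, then picks out a valid global partition; its existence is guaranteed by the first part of the argument. The main obstacle throughout is the verification of (\ref{eq:near}) after decomposition — all other constraints are preserved automatically by restricting an integral solution, but (\ref{eq:near}) forces the connectivity argument that leverages the bipartite structure of $G$ and pins down the exact removal threshold.
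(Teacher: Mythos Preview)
Your argument is correct and follows the same line as the paper's proof, though you spell out in more detail why constraint~(\ref{eq:near}) localizes to each component (the paper simply asserts that each $S_i$ is itself a skeleton in $G_i$ and hence the LP is feasible). One small quibble: the threshold $5$ is forced by your bound $d(v,S)\le 5$ for served clients, not by the $s$--$w$--$f(s)$ path, which only uses vertices at distance $\le 2$ from $S$.
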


\begin{proof}
Observe that if $S$ is a skeleton, then a witness solution $\phi$
opens facilities at distance at most 4 from $S$, and thus serves clients
with distance at most 5 from $S$. Consequently all vertices further from $S$ can
be safely removed and $S$ remains a skeleton.
Then $G$ might contain several connected components $G_1,\ldots,G_\ell$
with $G_i=(\C_i,\F_i,E_i)$.  The witness solution
$\phi$ can be partitioned among these components so that we get assignments
$\phi_i$ which in total open $k$ facilities to serve $p$ clients. In particular,
this means that for some partitions $k = \sum_{i} k_i$ and $p = \sum_{i} p_i$
sets $S_i=S\cap \F_i$ are skeletons, and consequently
$LP_{k_i,p_i}(G_i,L,S_i)$ are feasible. The latter condition can be
tested efficiently for any values $k_i$ and $p_i$.
While we cannot exhaustively test all partitions of $k$ and $p$, dynamic
programming lets us find partitions such that these linear programs are
feasible for each $i$.

For $i\in \{0,\ldots,\ell\}$, $k'\in\{0,\ldots,k\}$ and  $p'\in\{0,\ldots,p\}$
define a boolean value $F[i][k'][p']$, which equals {\bf true} iff there exist partitions
$k'=\sum_{j=1}^i k_j$ and $p' = \sum_{j=1}^i p_j$ such that
$LP_{k_j,p_j}(G_j,L,S_j)$ are all feasible for $j\le i$.

Clearly $F[0][0][0]$ is true, while
$F[0][k'][p']$ is false for any other pair $(k',p')$.
For $i > 1$ the value $F[i][k'][p']$ is simply an alternative of
$F[i-1][k'-k_i][p'-p_i]$ for every pair $(k_i,p_i)$ such that
$LP_{k_i,p_i}(G_i,L,S_i)$ is feasible, $k_i\le k'$ and $p_i\le p'$.
Thus in polynomial time one can check whether the desired partitions exists, 
and provided that together with a true value we also store the witness
partitions, also find these partitions.
\end{proof}


\section{Rounding}
\label{sec:rounding}

In the previous section we have shown how given a skeleton $S$
one can partition the initial instance into smaller subinstances
with more structural properties.
Our main goal in this section is to show that those structural
properties are in fact sufficient to construct a solution
for each of the subinstances, which is formalized in the following lemma.

\begin{lemma}\label{lem:tra}
Let $I=(G=(\C,\F,E),L,k,p)$ be an instance of \fullsup\  
and let $S \subseteq \F$.
If the following four conditions are satisfied:
\begin{enumerate}[(i)]
  \item $G$ is connected,
  \item\label{it:two} for any $u,u' \in S$, $u \neq u'$ we have $d(u,u') \ge 6$,
  \item\label{it:three} $N^5[S] = \F \cup \C$,
  \item $LP_{k,p}(G,L,S)$ admits a feasible solution,
\end{enumerate}
then one can find a distance-25 solution for $I$ in polynomial time.
\end{lemma}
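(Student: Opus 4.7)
My plan is to round the feasible fractional solution $(x^*,y^*)$ of $LP_{k,p}(G,L,S)$ via an iterated LP rounding on a tree of skeleton vertices, in the spirit of the capacitated $k$-center rounding of~\cite{svensson}, extended to simultaneously handle the outlier constraint $\sum x=p$ and the facility-count constraint $\sum y=k$. The argument proceeds in three stages: cluster the ground set around the skeleton, set up and round on a tree over $S$, and then instantiate the cluster-level budgets as an actual assignment of clients to facilities.

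\emph{Clustering and aggregation.} Since (iii) yields $N^5[S]=\C\cup\F$, for each vertex $v$ I pick a skeleton witness $\sigma(v)\in S$ with $d(v,\sigma(v))\le 5$, requiring $\sigma(v)=s$ whenever $v\in N^2[s]$ (consistent by (ii)). The clusters $V_s=\sigma^{-1}(s)$ then partition $\C\cup\F$ with $N^2[s]\sub V_s$ for every $s$. I aggregate the LP solution at the cluster level by setting $Y_s=\sum_{u\in V_s\cap\F}y^*_u$ and $P_s=\sum_{v\in V_s\cap\C,\,u}x^*_{uv}$, so that $\sum_s Y_s=k$, $\sum_s P_s=p$, and, crucially, by (\ref{eq:near}) at least one unit of $Y_s$ is concentrated inside $N^2[s]$.

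\emph{Skeleton tree and tree rounding.} I build an auxiliary graph $H$ on $S$, joining $s,s'$ whenever $G$ has an edge between $V_s$ and $V_{s'}$. By (i), $H$ is connected, so I fix any spanning tree $T$; any $T$-edge $ss'$ witnessed by $uv\in E$ with $u\in V_s$, $v\in V_{s'}$ satisfies $d(s,s')\le d(s,u)+1+d(v,s')\le 11$, and since $s,s'\in\F$ lie on the same side of the bipartition of $G$ the distance is even, hence $\le 10$. Now process $T$ from the leaves upward, committing at each $s$ to integer values $k_s$ (facilities to open from $V_s\cap\F$, preferring those in $N^2[s]$) and $p_s$ (clients to serve from $V_s\cap\C$), while pushing any fractional residuals in openings, service amounts, and local capacity slack to the parent; by (\ref{eq:near}) a facility inside $N^2[s]$ is always available whenever $k_s\ge 1$. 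At termination $\sum_s k_s=k$, $\sum_s p_s=p$, and per-facility capacities are respected.

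\emph{Distance bookkeeping and main obstacle.} A client $v\in V_s$ matched to a facility $f\in N^2[s]$ pays $d(v,f)\le 5+2=7$. When the rounding ships a client from $V_s$ to a facility in $N^2[s']$, the cost is $\le 5+d(s,s')+2$, so the overall $25$-bound follows provided that every ship-path in $T$ used by the rounding satisfies $d(s,s')\le 18$. The crux of the argument is exactly this ship-distance control: extending the single-budget tree rounding of~\cite{svensson} to simultaneously round $\sum y=k$, $\sum x=p$, and per-cluster capacity usage is the main obstacle, and it is where the strengthened LP constraint (\ref{eq:near}) must be used essentially. The guarantee of a facility inside $N^2[s]$ whenever $k_s\ge 1$ decouples the shipping distance from the full cluster radius (replacing a potential $+5$ on the facility side by $+2$) and keeps the total accumulation of tree-edge lengths incurred by any single client bounded by a small constant number of hops, which together with the per-edge bound of $10$ is what ultimately gives the $25$.
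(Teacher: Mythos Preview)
Your high-level architecture differs from the paper's, and the difference is exactly where your proposal has a genuine gap. The paper does \emph{not} round the two budgets $\sum y=k$ and $\sum x=p$ simultaneously. It rounds only the facility-opening vector $y$, using the \emph{distance-$r$ transfer} abstraction: first a distance-$2$ transfer gathers one unit from each $N^2[s]$ into a proxy $s'$ with $L(s')=\max_{u\in N^2[s]}L(u)$; then the tree-rounding lemma of~\cite{svensson} (which needs only that non-leaves have $y=1$) yields an integral distance-$2$ transfer on a tree whose edges have $G$-length $\le 10$, hence a distance-$20$ transfer in $G$; a final distance-$2$ step moves each $s'$ back to $m_s$. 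The composition is a distance-$24$ integral transfer of $y$, and the client assignment is then obtained in one shot by the deficit Hall argument of Lemma~\ref{lem:app}, which absorbs the outlier constraint $\sum x=p$ directly. No two-budget rounding is ever needed.

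In your proposal the two-budget step is precisely what you flag as ``the main obstacle,'' and you do not actually carry it out: you assert that a leaf-to-root sweep can commit integer $(k_s,p_s)$ while respecting per-facility capacities and pushing all three kinds of residual to the parent, but you give no invariant showing this is simultaneously possible, and the capacity coupling (the capacity available in $V_s$ depends on \emph{which} $k_s$ facilities you open, not just on $k_s$) is not addressed. Even granting such a procedure, your distance bookkeeping does not close: you need every shipped client to travel $d(s,s')\le 18$, yet with tree edges of length $\le 10$ a single hop gives only $17$ total and two hops already give $5+20+2=27$. Standard tree rounding (even the one-budget version in~\cite{svensson}) is a distance-$2$ transfer on the tree, i.e.\ two hops, so your target of $25$ is not met by the outline as written. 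The paper's $2+20+2+1$ accounting avoids this because the client-side ``$+1$'' comes from an LP edge, not from the cluster radius~$5$.
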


Before we give a proof of Lemma~\ref{lem:tra},
in Section~\ref{ssec:transfer} we
recall (an adjusted version) of a distance-$r$
transfer, a very useful notion introduced in~\cite{svensson},
together with its main properties.
Next, in Section~\ref{ssec:tree-rounding} we
prove Lemma~\ref{lem:tra}.

\subsection{Distance $r$-transfer}
\label{ssec:transfer}

 \begin{definition}\label{def:tra}
Given a graph $G=(V,E)$ with $W\sub V$, a capacity function $L : W \to \pint$
and $y\in \preal^W$, a vector $y'\in \preal^W$ is a distance-$r$ transfer of
$(G,L,y)$ if
\begin{enumerate}
  \item $\sum_{v\in W} y'_v = \sum_{v\in W} y_v$ and
  \item\label{it:nei} $\sum_{v\in W : d(v,U)\le r} L(v)y'_v \ge \sum_{u\in
  U}L(u)y_u$ for all $U\sub W$.
\end{enumerate}
If $y'$ is a characteristic vector of $F\sub W$, we say that $F$ is an
integral distance-$r$ transfer of $(G,L,y)$.
\end{definition}

Less formally a distance-$r$ transfer is a reassignment,
where the sum of $y$-variables is preserved and 
locally for any set $U \sub W$ the total fractional capacity
in a small neighborhood of $U$ does not decrease.

Like in~\cite{svensson}, an integral distance-$r$
transfer of the fractional solution of the LP already gives a distance-$r+1$
solution (in particular point~\ref{it:nei} of Definition~\ref{def:tra}
ensures that the Hall's condition is satisfied).
The proof must be modified though, so that it encompasses outliers.
\begin{lemma}\label{lem:app}
Let $G=(\C,\F,E)$ be a bipartite graph with a capacity function $L : \F \to
\pint$.
Assume $(x,y)$ is a feasible solution of $LP_{k,p}(G,L,S)$ and $F\sub \F$ is
an integral distance-$r$ transfer of $y$. Then one can find a distance-$r+1$
solution $(C,F,\phi)$ in polynomial time.
\end{lemma}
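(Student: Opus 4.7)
My plan is to reduce the task to an integral maximum flow computation and then to prove the required lower bound on the min cut by combining the distance-$r$ transfer property of $F$ with the LP constraints on $(x,y)$. I would set up an auxiliary digraph with source $s$, sink $t$, a node per facility $u'\in F$, a node per client $v\in\C$, arcs $s\to u'$ of capacity $L(u')$, arcs $v\to t$ of unit capacity, and infinite-capacity arcs $u'\to v$ whenever $d(u',v)\le r+1$. Since all finite capacities are integral, an integral max flow exists, can be found in polynomial time, and decomposes into paths $s\to u'\to v\to t$. Reading $\phi$ off the middle arc of each path yields an assignment obeying the capacity constraints (from the $s\to u'$ arcs) and the distance-$(r+1)$ constraint (from the middle arcs); the set $C$ of served clients will have $|C|=p$ provided the flow carries $p$ units. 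It therefore suffices to show that every $s$--$t$ cut has capacity at least $p$.

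Fix a finite cut and let $F_1\subseteq F$ be the facilities on the source side and $C_1\subseteq\C$ be the clients on the source side; finiteness forces $N^{r+1}[F_1]\cap\C\subseteq C_1$, and the cut value equals $\sum_{u'\in F\setminus F_1} L(u')+|C_1|$. The key step is to pick the test set $U=\{u\in\F:d(u,F_1)>r\}$ for the transfer inequality. For $u\in U$ and $u'\in F$ with $d(u',u)\le r$, the triangle inequality gives $d(u',F_1)\ge d(u,F_1)-d(u,u')>0$, so $u'\notin F_1$; hence the distance-$r$ transfer property applied to $U$ yields
\[
\sum_{u'\in F\setminus F_1} L(u') \;\ge\; \sum_{u'\in F,\,d(u',U)\le r} L(u') \;\ge\; \sum_{u\in U} L(u)\,y_u.
\]

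On the LP side, any client $v$ with $d(v,F_1)>r+1$ can, by~(\ref{eq:forb}), only be fractionally served by some $u$ with $(u,v)\in E$, and such $u$ satisfies $d(u,F_1)\ge d(v,F_1)-1>r$, placing $u$ in $U$. Combining this with~(\ref{lp:5}) and~(\ref{lp:4}),
\[
p \;=\; \sum_{v\in\C}\sum_{u\in\F} x_{uv} \;\le\; |N^{r+1}[F_1]\cap\C| + \sum_{u\in U}\sum_{v\in\C} x_{uv} \;\le\; |C_1| + \sum_{u\in U} L(u)\,y_u,
\]
which is bounded above by the cut value in view of the displayed transfer inequality. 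I expect the main obstacle to be precisely this choice of $U$: it must be far enough from $F_1$ for the $r$-transfer of its fractional capacity to land entirely inside $F\setminus F_1$, yet close enough that every LP-served client outside $N^{r+1}[F_1]$ is captured by $U$. With that choice in hand, the remainder is a routine max-flow min-cut (equivalently, capacitated Hall) argument.
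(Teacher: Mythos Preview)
Your proof is correct and follows essentially the same approach as the paper's: both establish feasibility of the required assignment by combining the LP constraints (\ref{lp:2}), (\ref{lp:4}), (\ref{lp:5}), (\ref{eq:forb}) with the distance-$r$ transfer inequality. The only difference is cosmetic---the paper phrases it via the deficit version of Hall's theorem (testing client sets $U\subseteq\C$ and applying the transfer to $N_G(U)$), whereas you phrase it via max-flow min-cut (testing cuts $(F_1,C_1)$ and applying the transfer to $\{u\in\F : d(u,F_1)>r\}$); these are dual presentations of the same argument.
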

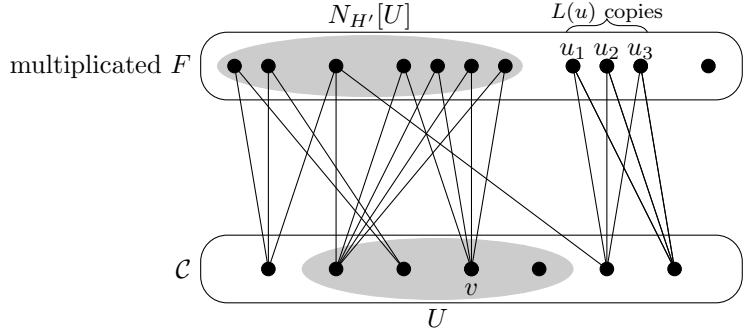
\begin{figure}[ht]
\begin{center}
\begin{tikzpicture}[scale=.9]

\draw[rounded corners=10] (0,3) rectangle (8,4);
\draw[rounded corners=10] (0,0) rectangle (8,1);
\filldraw[black!20] (3.5, 0.5) ellipse (2 and 0.45);
\draw (3.5, 0.05) node[below] {$U$};

\filldraw[black!20] (2.5, 3.5) ellipse (2.25 and 0.45);
\draw (2.5, 4) node[above=-2.5] {$N_{H'}[U]$};

\draw (0,0.5) node[left] {$\mathcal{C}$};
\draw (0, 3.5) node[left] {$\text{multiplicated }F$};

\filldraw (4, .5) circle (0.1) node[below=2] {$v$};

\foreach \x in {1,...,7}{
\filldraw (\x, .5) circle (0.1);
}

\foreach \x in {1,2,4,6,7,8,9,11,12,13,15}{
\filldraw (0.5*\x, 3.5) circle (0.1);
}
.

\draw [decorate,decoration={brace,amplitude=5pt}] (5.4, 4.0) --
node[above]{\footnotesize{$L(u)$ copies}} (6.6, 4.0);

\foreach \x in {1,...,3}{
\filldraw (5+0.5*\x, 3.5) circle (0.1) node[above=-1] {$u_\x$};

}

\foreach \x in {1,...,4}{
\draw (2.5+0.5*\x, 3.5)  -- (4,.5);

\draw (2.5+0.5*\x, 3.5)  -- (2,.5);

}

\draw (2, 3.5)  -- (2,.5);
\draw (1, 3.5)  -- (3,.5);

\draw (1, 3.5)  -- (1,.5);
\draw (.5, 3.5)  -- (1,.5);

\draw (.5, 3.5)  -- (3,.5);
\draw (2, 3.5)  -- (1,.5);

\draw (5.5, 3.5)  -- (6,.5);
\draw (6, 3.5)  -- (6,.5);
\draw (6.5, 3.5)  -- (6,.5);

\draw (2, 3.5)  -- (6,.5);

\draw (6.5, 3.5)  -- (7,.5);
\draw (6, 3.5)  -- (7,.5);
\draw (5.5, 3.5)  -- (7,.5);

\draw (6.5, 3.5)  -- (7,.5);
\draw (6, 3.5)  -- (7,.5);
\draw (5.5, 3.5)  -- (7,.5);

\end{tikzpicture}
\end{center}
\caption{\label{fig:bip}Graph $H'$ obtained from $H$ by removing vertices from
$\F \sm F$ and duplicating each vertex $u\in F$ to its capacity. Shaded ellipses represent
sets used in Hall's theorem.}
\end{figure}

\begin{proof}
Consider a bipartite graph $H=(\C,\F,E_H)$ with $(v,u)\in E_H$ if $d_G(v,u)\le
r+1$. Modify $H$ to obtain $H'$ by removing vertices from $\F \sm F$
and duplicating each vertex $u\in F$ to its capacity, i.e.\ $L(u)$ times, see
also Fig.~\ref{fig:bip}.
Observe that cardinality-$p$ matchings in this graph correspond to
distance-$r+1$ solutions for $G$. 
If any, such a matching can clearly be found in polynomial time.
We shall prove its existence by checking the deficit version of Hall's theorem,
i.e. that for each $U\sub \C$ we have 
$$\sum_{u\in F : d(u,U)\le r+1} L(u) \ge|U|-|\C|+p$$
First, observe that $$\sum_{v\in U, u\in \F} x_{uv} = \sum_{v\in \C, u\in \F}
x_{uv} - \sum_{v\in \C\sm U, u\in F} x_{uv} \overset{(\ref{lp:2}),(\ref{lp:5})}{\ge} p -  \sum_{v\in \C\sm U} 1 = p
- |\C\sm U| = |U|-|\C|+p.$$
Moreover
\begin{multline*}
\sum_{v\in U, u\in \F} x_{uv} = \sum_{v\in U, u\in N_G(U)}
x_{uv}\le \sum_{u\in N_G(U)}\sum_{v\in \C} x_{uv} \overset{(\ref{lp:4})}{\le}
\sum_{u\in N_G(U)} L(u)y_u\\ 
\overset{\textrm{Def.}~\ref{def:tra}\textrm{ point }\ref{it:nei}}{\le}
\sum_{u\in \F : d_G(u, N_G(U))\le r} L(u) = \sum_{u\in \F : d_G(u,U)\le r+1} L(u)\,.
\end{multline*}
Together these equalities conclude the proof.
\end{proof}

We proceed with a pair of simple properties of transfers. 
\begin{fact}\label{fct:comp}
Let $G=(V,E)$ be a graph with $W\sub V$ and a capacity function $L: W \to
\pint$, and let $y,y',y''\in \preal^W$.
Assume $y'$ is a distance-$r$ transfer of $(G,L,y)$
and $y''$ is a distance-$r'$ transfer of $(G,L,y')$. Then $y''$ is a
distance-$r+r'$ transfer of $(G,L,y)$.
\end{fact}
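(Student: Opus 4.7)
The plan is to verify the two conditions of Definition~\ref{def:tra} directly for the pair $(y, y'')$.

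The sum-preservation condition is immediate: by the first condition applied to each of the two given transfers,
$\sum_{v\in W} y''_v = \sum_{v\in W} y'_v = \sum_{v\in W} y_v$.

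For the neighbourhood condition, fix an arbitrary $U \subseteq W$. The key idea is to introduce the intermediate set
$$U' \;=\; \{w \in W \;:\; d(w, U) \le r\},$$
which is the natural ``witness'' of the first transfer. First I would apply condition~(\ref{it:nei}) of Definition~\ref{def:tra} for the transfer $y'$ of $y$ at the set $U$; this gives
$$\sum_{w \in U'} L(w)\, y'_w \;\ge\; \sum_{u \in U} L(u)\, y_u.$$
Then I would apply condition~(\ref{it:nei}) for the transfer $y''$ of $y'$ at the set $U'$, obtaining
$$\sum_{v \in W \,:\, d(v, U') \le r'} L(v)\, y''_v \;\ge\; \sum_{w \in U'} L(w)\, y'_w.$$
Chaining these two inequalities yields a lower bound on $\sum_{u\in U} L(u)\, y_u$ by a sum of $L(v) y''_v$ over the $r'$-neighbourhood of $U'$.

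The remaining step is a one-line triangle-inequality argument: if $d(v, U') \le r'$, there is $w \in U'$ with $d(v, w) \le r'$, and then some $u \in U$ with $d(w, u) \le r$ (by the definition of $U'$), so $d(v, U) \le d(v, w) + d(w, u) \le r + r'$. Hence $\{v \in W : d(v, U') \le r'\} \subseteq \{v \in W : d(v, U) \le r+r'\}$, and enlarging the index set of a sum of nonnegative terms only increases it. Combining this containment with the chained inequality above gives exactly condition~(\ref{it:nei}) for $y''$ as a distance-$(r+r')$ transfer of $(G, L, y)$. There is no real obstacle here; the only thing to be careful about is that $U' \subseteq W$, so the second application of Definition~\ref{def:tra} is legitimate.
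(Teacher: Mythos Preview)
Your proof is correct. The paper states Fact~\ref{fct:comp} without proof, treating it as immediate; your argument (chaining the two neighbourhood inequalities via the intermediate set $U'=\{w\in W: d(w,U)\le r\}$ and then using the triangle inequality to enlarge the index set) is exactly the routine verification the authors had in mind.
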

\begin{fact}\label{fct:image}
Let $G=(V,E)$ and $G'=(V',E')$ be graphs with $W\sub V$ and $W\sub V'$ and a
capacity function $L : W\to \pint$. Let $y,y'\in \preal^W$ and let $f : \pint\to
\pint$ be a monotonic function such that $d_G(u,v)\le f(d_{G'}(u,v))$ for any
$u,v\in W$.
Assume $y'$ is a distance-$r$ transfer of $(G',L,y)$. Then $y'$ is a
distance-$f(r)$ transfer of $(G,L,y)$.
\end{fact}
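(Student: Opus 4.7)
The plan is to verify the two defining conditions of a distance-$f(r)$ transfer directly from the hypothesis that $y'$ is a distance-$r$ transfer of $(G',L,y)$. Condition~(1) of Definition~\ref{def:tra}, namely $\sum_{v\in W} y'_v = \sum_{v\in W} y_v$, does not refer to the underlying graph at all, so it is inherited for free.

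For condition~(\ref{it:nei}), I fix an arbitrary $U \sub W$ and compare the neighbourhoods used in the two inequalities. The key observation is the set inclusion
\[
\{v \in W : d_{G'}(v, U) \le r\} \;\sub\; \{v \in W : d_G(v, U) \le f(r)\},
\]
which follows from the hypothesis: if $v$ lies in the left-hand side, there is some $u \in U$ with $d_{G'}(v,u) \le r$, and then $d_G(v,u) \le f(d_{G'}(v,u)) \le f(r)$ by the monotonicity of $f$, so $d_G(v,U) \le f(r)$.

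Since $L(v) y'_v \ge 0$ for every $v \in W$, summing the nonnegative quantity $L(v)y'_v$ over the larger set on the right gives an upper bound on the sum over the smaller set on the left. Combining this with condition~(\ref{it:nei}) of Definition~\ref{def:tra} applied to $(G',L,y)$ at the same set $U$ yields
\[
\sum_{v\in W : d_G(v,U)\le f(r)} L(v)y'_v \;\ge\; \sum_{v\in W : d_{G'}(v,U)\le r} L(v)y'_v \;\ge\; \sum_{u\in U} L(u)y_u,
\]
which is exactly condition~(\ref{it:nei}) for $(G,L,y)$ at radius $f(r)$. There is no real obstacle here; the only subtle point is remembering to invoke monotonicity of $f$ when lifting a bound on $d_{G'}(v,u)$ for a single witness $u$ to a bound on $d_G(v,U)$.
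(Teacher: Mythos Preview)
Your proof is correct. The paper does not give a proof of this fact at all (it is stated as a self-evident Fact), and your argument is exactly the natural verification of the two conditions in Definition~\ref{def:tra}, using the set inclusion $\{v : d_{G'}(v,U)\le r\}\sub\{v : d_G(v,U)\le f(r)\}$ that follows from the hypothesis and monotonicity of $f$.
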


The following is the main technical contribution
of~\cite{svensson}.

\begin{lemma}[\cite{svensson}]\label{lem:tree}
Let $T=(V,E)$ be a tree with a capacity function $L:V\to \pint$ and let $y\in
[0,1]^V$ be a vector such that $y_v=1$ for every non-leaf $v\in V$ and
$\sum_{v\in V} y_v \in \pint$.
Then one can find in polynomial time an integral 
distance-2 transfer of $(T,L,y)$.
\end{lemma}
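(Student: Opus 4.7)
The plan is to process the tree bottom-up, rounding fractional values at leaves while maintaining the distance-2 transfer property in a local sense. First I would root the tree at an arbitrary non-leaf (if all vertices are leaves the tree has at most two vertices and the claim is trivial). Since $y_v = 1$ for every non-leaf $v$, all fractionality is concentrated at leaves, and the sum $\sum_{\ell \text{ leaf}} y_\ell$ is an integer by the assumption on $\sum_v y_v$. A natural ansatz is to set $y'_v = 1$ for every non-leaf and then decide which leaves to include so that $|F| = \sum_v y_v$; this immediately takes care of the transfer constraint whenever $U$ contains only non-leaves, since each $v\in U$ already contributes $L(v)$ on the left-hand side.

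Next I would process the internal vertices in post-order. For each internal vertex $u$ let $\Lambda_u$ denote its leaf children; any two vertices of $\Lambda_u$ are at distance exactly $2$, so within $\Lambda_u$ we have complete freedom to shuffle fractional $y$-mass. The key primitive is: sort $\Lambda_u$ by capacity in decreasing order and pick the top prefix to include in $F$, choosing the prefix length so that the local sum $\sum_{\ell\in \Lambda_u} y'_\ell$ matches $\sum_{\ell\in \Lambda_u} y_\ell$ as closely as possible. Because the chosen leaves are those of largest capacity, the capacity-weighted sum $\sum_{\ell\in\Lambda_u} L(\ell)y'_\ell$ stabilizes the transfer condition for any $U\subseteq\Lambda_u$. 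When the local sum is not an integer, a fractional residue of value in $[0,1)$ is left pending and bubbled up along the tree edge to the parent of $u$, where it will be combined with residues from sibling subtrees; the invariant that the pending residue is always in $[0,1)$ means only one extra leaf worth of mass ever travels across a level.

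The hard part will be certifying the distance-2 transfer condition for arbitrary $U\subseteq V$, not just sets lying inside a single star. Crossing-star sets are the delicate case because a leaf of one star and a leaf of a sibling star are at distance $4$, too far for direct redistribution. The mechanism I would use to beat this is to always route cross-star transfers through the non-leaf separating them, whose $y'$ is integrally $1$: when bubbling residue from $u$ to its parent $p$, I would include the parent's contribution $L(p)$ in the accounting, which is available to any $U$ passing through $p$'s neighborhood. A clean way to make this rigorous is to prove, by induction on the post-order traversal, the invariant that for every rooted subtree $T_u$ the current partial rounding is a distance-2 transfer of $y$ restricted to $T_u$ together with a pending integer amount attached to the edge $(u,\mathrm{parent}(u))$; the inductive step combines the invariants at the children of $u$ together with the local greedy rounding on $\Lambda_u$.

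The main obstacle I anticipate is showing that greedy capacity-priority rounding inside a single star plus the one-unit bubble-up discipline suffices to certify the transfer inequality for every $U$ straddling several stars; if this fails in a subtle boundary case, a fallback is to frame the problem as optimizing over the polytope of distance-2 transfers with fixed support sum and invoke an integrality argument exploiting the laminar structure of distance-2 neighborhoods in a rooted tree (which should furnish a totally balanced constraint matrix), thereby extracting an integer vertex directly.
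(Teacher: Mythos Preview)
The paper does not prove this lemma at all: it is stated with the citation \cite{svensson} and introduced as ``the main technical contribution of~\cite{svensson}'' (An, Bhaskara and Svensson), so it is imported as a black box. There is therefore no proof in the paper to compare your proposal against.

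As for the proposal itself, it is a reasonable outline but not yet a proof, and you identify the gap yourself. The bubble-up discipline you describe moves residue from leaves in one star $\Lambda_w$ toward leaves in a sibling star $\Lambda_u$, and those leaf sets are at distance at least $4$ in $T$; routing ``through'' the internal vertex $p$ does not automatically rescue the inequality $\sum_{v:d(v,U)\le 2} L(v)y'_v \ge \sum_{u\in U} L(u)y_u$ for sets $U$ consisting of leaves only, since the internal vertices contribute the same $L(p)\cdot 1$ on both the $y$ and the $y'$ side. Your inductive invariant (``partial rounding is a distance-2 transfer on $T_u$ plus a pending amount on the edge to the parent'') is the right shape, but making it go through requires tying the pending residue to a specific leaf of $\Lambda_u$ and controlling its capacity relative to the leaf that will eventually absorb it higher up; the greedy ``largest capacity first'' rule alone does not obviously guarantee this across levels. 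The polytope fallback is also not immediate: distance-$2$ neighborhoods in a tree are not laminar (two leaves at distance $4$ have overlapping but incomparable $N^2$-balls), so total balancedness would need a separate argument. If you want to complete this, consult the original proof in \cite{svensson}, which handles exactly these issues.
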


\subsection{Final rounding}
\label{ssec:tree-rounding}

\begin{lemma}
\label{lem:build-tree}
Let $G=(\C,\F,E)$ be a connected bipartite graph
and let $S \subseteq \F$ such that $d(v,S)\le 5$ for every $v\in \C\cup \F$.
There exists an auxiliary tree $T=(S,E_T)$ 
such that $d(u,u')\le 10$ for any $\{u,u'\}\in E_T$.
Moreover, such a tree can be computed in polynomial time.
\end{lemma}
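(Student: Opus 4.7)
The plan is to exhibit a concrete auxiliary graph on vertex set $S$, show it is connected, and then take any spanning tree. Define $H=(S,E_H)$ by putting $\{u,u'\}\in E_H$ whenever $d_G(u,u')\le 10$. Once connectivity of $H$ is established, a spanning tree $T$ of $H$ has the required distance property by construction, and both $H$ and $T$ are computable in polynomial time (since all distances in $G$ can be computed, e.g., by BFS from each vertex of $S$).

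The core task is thus to show that $H$ is connected. Given any two vertices $s,s'\in S$, use the connectedness of $G$ to find a shortest path $s=v_0,v_1,\ldots,v_n=s'$ in $G$. By assumption every $v_i$ satisfies $d_G(v_i,S)\le 5$, so for each $i$ one can pick $s_i\in S$ with $d_G(v_i,s_i)\le 5$, choosing $s_0=s$ and $s_n=s'$. By the triangle inequality,
\[
d_G(s_i,s_{i+1}) \;\le\; d_G(s_i,v_i)+d_G(v_i,v_{i+1})+d_G(v_{i+1},s_{i+1}) \;\le\; 5+1+5 \;=\; 11.
\]

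At this point a naive bound would give only $11$, one more than allowed. The key observation — and the step that I expect to be the main subtlety — is that $S\subseteq \F$ and $G$ is bipartite, so the distance in $G$ between any two vertices of $\F$ is even. Hence $d_G(s_i,s_{i+1})$ is an even integer that is at most $11$, i.e.\ at most $10$. Consequently each consecutive pair $s_i,s_{i+1}$ is joined by an edge of $H$ (or coincides), so $s$ and $s'$ lie in the same connected component of $H$.

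Since $s,s'\in S$ were arbitrary, $H$ is connected. Taking any spanning tree $T$ of $H$, every edge $\{u,u'\}\in E_T$ is also an edge of $H$, so $d_G(u,u')\le 10$, which is exactly the desired property. The entire construction — computing pairwise distances among vertices of $S$, forming $H$, and extracting a spanning tree — runs in polynomial time, completing the proof.
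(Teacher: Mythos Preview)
Your proof is correct. Both your argument and the paper's hinge on the same parity observation (two vertices of $\F$ in a bipartite graph are at even distance), but the surrounding structure differs. The paper grows the tree incrementally, Prim-style: having built a tree on $S'\subseteq S$, it takes a shortest $G$-path from $S'$ to $S\setminus S'$ and argues that if its length exceeded $10$ (hence were at least $12$ by parity), the midpoint would be at distance $\ge 6$ from all of $S$, contradicting the hypothesis. You instead define the threshold graph $H$ on $S$ up front and prove it connected via a ``chain of nearest representatives'' along any $G$-path between $s,s'\in S$, using the triangle inequality to get $11$ and then parity to cut to $10$. Your route is a bit more modular (build $H$, prove connectivity, take any spanning tree), while the paper's midpoint argument avoids choosing a representative for every path vertex; both are equally elementary and essentially the same length.
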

\begin{proof}
We shall grow a tree adding a leaf in each step.
At the beginning we select any $s\in S$ and initialize with a single-vertex
tree.
Assume we have already grown a tree with vertex-set $S'\sub S$.
Choose a shortest path connecting $S'$ to $S'\sm S$. Such a path exists since
$G$ is connected. If its length is at most 10, we add the endpoint in
$S\sm S'$ to the tree, joining it with the other endpoint.
For a proof by contradiction assume that a shortest path has length greater than
10. Since $G$ is bipartite, its length needs to be even, and thus at least 12.
Choose the midpoint of such a path. Its distance both
to $S'$ and to $S'\sm S$ is at least 6, otherwise the path could be shortened.
This vertex contradicts the assumption that $d(v,S)\le 5$ for every $v\in \C\cup
\F$.
\end{proof}

We are ready to prove Lemma~\ref{lem:tra}.

\begin{proof}[proof of Lemma~\ref{lem:tra}]
Since $G$ is connected and every vertex of $G$ is within distance $5$ from~$S$,
we can use Lemma~\ref{lem:build-tree} to construct a tree $T=(S,E_T)$.
Let us add a duplicate $s'$ of every $s\in S$ to create a bipartite graph
$G'=(\C,\F',E')$, where $\F' = \F \cup S'$ and $S'=\{s' : s\in S\}$.
For each $s\in S$ choose $m_s = \argmax\{ L(u) : u\in  N^2[s]\cap \F\}$ and set  $L(s') = L(m_s)$.
Let us create a tree $T'$ with $V(T')=\F'\sm \{m_s : s\in S\}$.
We build it in two steps, see also Fig.~\ref{fig:tree}:
\begin{enumerate}
  \item create a tree with vertex set $S'$ so that $\{u',v'\}$ is an edge iff
  $\{u,v\}\in E(T)$,
  \item connect each vertex in $\F \sm \{m_s : s\in S\}$ to the closest vertex
  in $S'$.
\end{enumerate}
Observe that endpoints of the edges created in the first step are at most at
distance 10 in $G'$, while endpoints of the edges created in the second step, at
most at distance 4.
Consequently, $d_{G'}(u,v)\le 10 d_{T'}(u,v)$ for any $u,v\in V(T')$.
Moreover, note that all non-leaves of $T'$ belong to $S'$.

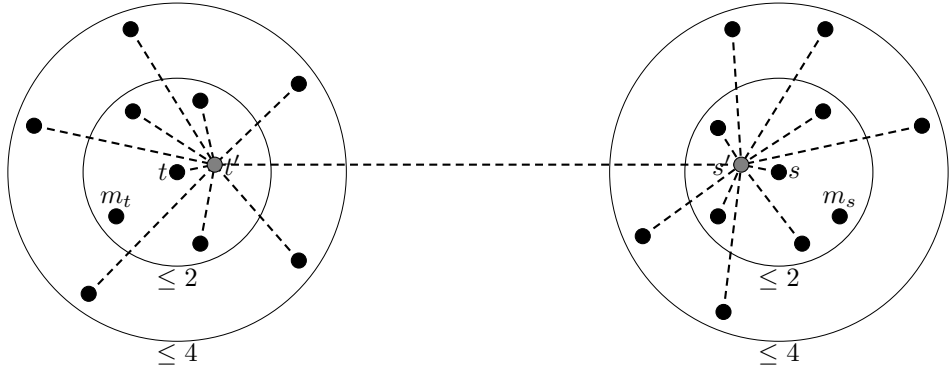
\begin{figure}
\begin{center}
\begin{tikzpicture}[scale=1]
\draw[thick, densely dashed] (3.5, 0.1) -- (-3.5, 0.1);

\begin{scope}[xshift = 4cm, xscale=-1]
\filldraw (0,0) circle (0.1) node[right]{$s$};

\filldraw (6*36:1) circle (0.1) node[above] {$m_s$};

\foreach \x in {1,3.5,7,9}{
\filldraw (36*\x:1) circle (0.1);
\draw[thick, densely dashed] (36*\x:1) -- (.5,.1);
}

\foreach \x in {2,3,4.5,8.1, 9.3}{
\filldraw (36*\x:2) circle (0.1);
\draw[thick, densely dashed] (36*\x:2) -- (.5,.1);
}

\draw[thin] (0,0) circle (1.25);
\draw (0, -1.25) node[below=-2.5] {$\le 2$};
\draw[very thin] (0,0) circle (2.25);
\draw (0, -2.25) node[below=-2.5] {$\le 4$};

\draw[thick, densely dashed] (0, 0) -- (0.5, .1);
\draw[fill=gray] (0.5, 0.1) circle (0.1) node[left] {$s'$};

\end{scope}

\begin{scope}[xshift = -4cm]

\filldraw (0,0) circle (0.1) node[left]{$t$};

\filldraw (6*36:1) circle (0.1) node[above] {$m_t$};

\foreach \x in {2,3.5,8}{
\filldraw (36*\x:1) circle (0.1);
\draw[thick, densely dashed] (36*\x:1) -- (.5,.1);
}

\foreach \x in {1,4.5, 6.5,3,9}{
\filldraw (36*\x:2) circle (0.1);
\draw[thick, densely dashed] (36*\x:2) -- (.5,.1);
}

\draw[thin] (0,0) circle (1.25);
\draw (0, -1.25) node[below=-2.5] {$\le 2$};
\draw[very thin] (0,0) circle (2.25);
\draw (0, -2.25) node[below=-2.5] {$\le 4$};
\draw[thick, densely dashed] (0, 0) -- (0.5, .1);
\draw[fill=gray] (0.5, 0.1) circle (0.1) node[right] {$t'$};
\end{scope}
\end{tikzpicture}
\end{center}
\caption{\label{fig:tree}
A fragment of the tree $T'$ with $s,t\in S$. Nodes of $\F$ are
marked in black, of $S'$ in gray. Edges of $T'$ are represented as dashed lines.
Note that $m_s$ and $m_t$ are not vertices of $T'$.}\
\end{figure}

Let $(x,y)$ be a feasible solution of $LP_{k,p}(G,L,S)$. Note that $y$
can be interpreted as a vector in $\preal^{\F'}$ extending with zeroes at $S'$.
We shall give an integral distance-24 transfer $F$ of $(G',L,y)$.
Despite it being formally a transfer in $G'$, $F$ will be a subset of $\F$, i.e. a
transfer of $(G,L,y)$  as well.

Recall that by (\ref{it:two}), the sets
$N^2[s]$ are pairwise disjoint and in particular $m_s$ are pairwise different. 
This lets us use~\eqref{eq:near} to gather in $s'$ one unit from $N^2[s]$ for
every $s\in S$ so that the whole value in $m_s$
is transferred to $s'$.
Note that $L(s')\ge L(u)$ for each $u\in N^2[s]$, so
this way we obtain a distance-2 transfer $y'$ of $(G',L,y)$. Additionally, we
have made sure that $y'_{m_s}=0$, so $y'$ can be interpreted as a vector in
$\preal^{V(T')}$, and that $y'_{s'}=1$, so $y'$ is 1 for all non-leaves of
$T'$. This lets us use Lemma~\ref{lem:tree} to obtain an integral distance-2 transfer $F'\sub V(T')$ of $(T',L,y')$. According to
Fact~\ref{fct:image} it can be interpreted as a distance-20 transfer of
$(G',L,y')$. Finally we move the value from $s'$ to $m_s$ for each $s\in S$.
Note that these vertices have equal capacities, so this step can be interpreted
as an integral distance-2 transfer.

The final transfer is therefore a composition
of a distance-2 transfer, a distance-20 transfer and a distance-2 transfer.
Thus, by Fact~\ref{fct:comp} it is a distance-24 transfer.\footnote{A simpler
construction gives a distance-$30$ transfer, without introducing additional
vertices $S'$. It is enough first to gather one unit from $N^2[s]$ in $m_s$ and build 
a tree on vertices $m_s$, where adjacent vertices of the tree are at distance
at most $14$ in $G$. By using Lemma~\ref{lem:tree} one obtains a distance-28 transfer,
which together with the initial distance-2 transfer gives an integral distance-30 transfer.}
By Lemma~\ref{lem:app} having an integral distance-24 transfer
is enough to construct a distance-25 solution $\phi$ in polynomial time,
which concludes the proof of Lemma~\ref{lem:tra}.
\end{proof}

\section{Wrap-up}
\label{sec:wrap-up}

With the results of previous section, we are ready to the prove the
main theorem.
\begin{theorem}\label{thm:gen}
The \fullsup\  problem admits a 25-approximation algorithm.
\end{theorem}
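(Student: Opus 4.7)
My plan is to simply compose the tools developed in Sections~\ref{sec:thresholding}--\ref{sec:rounding}. First I would apply Algorithm~\ref{alg:red} from Section~\ref{sec:thresholding}, reducing the general-metric problem to a sequence of graphic instances $G_{\le\tau}$, one for each threshold $\tau\in T$. Recall from that section that it suffices to exhibit, for graphic instances, a procedure which either produces a distance-25 solution or certifies that no distance-1 solution exists; concatenating with the thresholding loop then yields an overall $25$-approximation.

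For a fixed graphic instance $G=(\C,\F,E)$ I would run Algorithm~\ref{alg:pre}, which outputs at most $k$ candidate sets $S_1,\ldots,S_m\sub \F$. By the lemma following Algorithm~\ref{alg:pre}, if a distance-1 solution exists then at least one $S_j$ is a skeleton, though the algorithm cannot tell which one. For each candidate $S$ in turn I would perform the following test-and-build procedure: discard all vertices $v$ with $d(v,S)>5$, split the remainder into connected components $G_1,\ldots,G_\ell$, set $S_i=S\cap V(G_i)$, and run the dynamic program described in Lemma~\ref{lem:dp} to decide whether there exist partitions $k=\sum_{i}k_i$, $p=\sum_{i}p_i$ for which each $LP_{k_i,p_i}(G_i,L,S_i)$ is feasible. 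If such partitions are found, I would then invoke Lemma~\ref{lem:tra} on each subinstance (its four hypotheses are met: connectedness is by construction, the separation property is inherited from Algorithm~\ref{alg:pre}, condition (iii) follows from discarding vertices far from $S$, and feasibility of the LP is what the DP just certified) and take the union of the produced distance-25 solutions. Since the $k_i$'s sum to $k$ and the $p_i$'s sum to $p$, the resulting triple is a legitimate distance-25 solution to the whole graphic instance.

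The only subtlety is what to do with candidates that are not skeletons: for such an $S$ the DP may or may not find partitions, but \emph{if} it does then Lemma~\ref{lem:tra}'s hypotheses still hold for each resulting subinstance and the construction goes through unchanged, producing a perfectly valid distance-25 solution. Conversely, by Lemma~\ref{lem:dp}, the DP is guaranteed to succeed on at least one candidate $S_j$ whenever the original instance admits a distance-1 solution. Thus it is safe to return the first successful output, and correctness follows: if no candidate triggers a successful build we report NO, which is only done when no distance-1 solution exists. Polynomial running time is immediate since there are $O(|T|)$ thresholds, $O(k)$ candidate skeletons per threshold, and each step (LP feasibility, DP, and the rounding of Lemma~\ref{lem:tra}) runs in polynomial time.

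I do not anticipate any real obstacle here; the theorem is essentially a bookkeeping combination of the earlier lemmas. The one mildly nontrivial point worth stating carefully is the observation above — that we do not need to identify \emph{which} candidate from Algorithm~\ref{alg:pre} is the skeleton, because every candidate that passes the DP test automatically meets the hypotheses of Lemma~\ref{lem:tra}, and at least one candidate is guaranteed to pass whenever a distance-1 solution exists.
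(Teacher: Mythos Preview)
Your proposal is correct and follows essentially the same route as the paper's proof: thresholding via Algorithm~\ref{alg:red}, enumerating the candidate sets from Algorithm~\ref{alg:pre}, applying Lemma~\ref{lem:dp} to each, and then invoking Lemma~\ref{lem:tra} on the resulting subinstances. Your explicit remark that a non-skeleton candidate passing the DP test still satisfies all four hypotheses of Lemma~\ref{lem:tra} (so the first success can safely be returned) matches the paper's reasoning and is even slightly more carefully stated.
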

\begin{proof}
Section~\ref{sec:thresholding} with Algorithm~\ref{alg:red} provides (a Turing-like)
reduction to graphic instances. 
Algorithm~\ref{alg:pre} of Section~\ref{sec:skeleton} given such an instance outputs several sets. 
Provided that a distance-1 solution exists, one of them is guaranteed to
be a skeleton. Each of these sets is then processed separately. As described
in Section~\ref{sec:clustering}, some redundant vertices are removed and the graph is
partitioned into connected components. Dynamic programming (Lemma~\ref{lem:dp})
is then used to find a compatible partition of $k$ and $p$, 
so that each linear program $LP_{k_i,p_i}(G_i,L,S_i)$ admits a feasible solution.
While this procedure might fail in general, it is guaranteed to succeed for a skeleton, hence at least once if a distance-1 solution exists. 

Note that if such a partition is found, then for each
of the instances $(G_i,L,k_i,p_i)$ together with sets $S_i$,
we can use Lemma~\ref{lem:tra} as all the conditions $(i)-(iv)$ are satisfied.
A sum of solutions for these $\ell$ instances is finally
returned as a distance-25 solution for the original graphic instance.
\end{proof}

\section*{Acknowledgements}

We would like to thank Samir Khuller for suggesting the study of this variant
of the $k$-{\sc center} problem and helpful discussions.

\bibliography{kcenter}

\newpage
\appendix

\section{Soft capacities and uniform capacities}\label{sec:improvements}
\label{app:improvements}
\subsection{Soft capacities}
A variant of \fullsup\ with soft capacities can be easily reduced to the
original problem preserving the quality of solutions.
It suffices to duplicate $|\C|$ times each $v\in \F$.
Opening several facilities in $v$ then corresponds to opening facilities in
several copies of $v$.
\begin{theorem}
The \fullsup\  problem with soft capacities admits a 25-approximation algorithm.
\end{theorem}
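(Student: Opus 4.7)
The plan is to reduce the soft-capacity variant to the hard-capacity variant and then invoke Theorem~\ref{thm:gen}. Given a soft-capacity instance $I=(k,p,\C,\F,d,L)$, I would construct an auxiliary hard-capacity instance $I'=(k,p,\C,\F',d',L')$ where $\F' = \{v^{(i)} : v \in \F,\; 1 \le i \le |\C|\}$ contains $|\C|$ copies of each facility. Capacities are inherited by $L'(v^{(i)}) = L(v)$, and distances are extended by $d'(v^{(i)}, c) = d(v,c)$ for $c \in \C$, $d'(v^{(i)}, w^{(j)}) = d(v,w)$ for $v \ne w$, and $d'(v^{(i)}, v^{(j)}) = 0$ for $i \ne j$. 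The triangle inequality for $d'$ follows immediately from that of $d$, since copies of the same location behave as a single point and copies of distinct locations behave like the originals.

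Next, I would establish a bijection between distance-$\tau$ solutions of $I$ and $I'$ that preserves the objective. A soft-capacity solution $(C, F, \phi)$ for $I$, with $F$ a size-$k$ multiset over $\F$, lifts to a hard-capacity solution for $I'$ by assigning the $j$-th occurrence of a location $v$ in $F$ to a distinct copy $v^{(j)}$, and redirecting the clients formerly assigned to that opening of $v$ to the chosen copy. This uses at most $|\C|$ copies per location, since in any solution every opened facility may be assumed to serve at most $|\C|$ clients, and excess openings that serve no clients contribute nothing to the objective and may be placed arbitrarily among unused copies. Conversely, a hard-capacity solution for $I'$ projects to a soft-capacity solution for $I$ by mapping each copy $v^{(i)}$ back to $v$ and forming the corresponding multiset of size $k$. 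Since $d'(v^{(i)}, c) = d(v,c)$ for every client $c$, the value $\max_{v \in C} d(v, \phi(v))$ is unchanged in either direction.

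Consequently, running the algorithm of Theorem~\ref{thm:gen} on $I'$---which has size polynomial in the input, as $|\F'| = |\C| \cdot |\F|$---returns a distance-$25 \cdot OPT(I')$ hard-capacity solution in polynomial time; translating it back through the bijection yields a distance-$25 \cdot OPT(I)$ soft-capacity solution for $I$, using that $OPT(I') = OPT(I)$. The only subtlety is the bookkeeping verification that $|\C|$ copies suffice to realize every multiset of openings encountered in the correspondence, and this is essentially immediate from the argument above. No substantive obstacle arises beyond this minor sanity check.
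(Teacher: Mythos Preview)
Your proposal is correct and takes essentially the same approach as the paper: both reduce the soft-capacity variant to the hard-capacity variant by creating $|\C|$ copies of every potential facility location and then invoking Theorem~\ref{thm:gen}. The paper's argument is a one-line sketch of exactly this reduction, while you have spelled out the correspondence between solutions in more detail; the only minor imprecision is calling the correspondence a ``bijection'' (it is many-to-one from hard to soft), but this does not affect the conclusion that $OPT(I)=OPT(I')$.
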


\subsection{Uniform capacities}
In the special case of \fullsup\  where the capacities are uniform,
we can obtain a slightly better approximation factor.
Namely, in the proof of Lemma~\ref{lem:tra} 
we can set $m_s=s$ and avoid introducing additional vertices $s'$, using $s$ instead. With this change the
third component of the transfer -- moving the value from $s'$ to $m_s$ -- is not
necessary, thus we get an integral distance-22 transfer. Analogously to
Theorem~\ref{thm:gen}, we then obtain the following result.
\begin{theorem}
The \fullsup\  problem with uniform capacities admits a 23-approximation algorithm.
\end{theorem}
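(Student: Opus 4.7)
The plan is to follow the proof of Theorem~\ref{thm:gen} essentially verbatim: reduce to graphic instances via Algorithm~\ref{alg:red}, enumerate candidate skeletons with Algorithm~\ref{alg:pre}, decompose each candidate instance via Lemma~\ref{lem:dp}, and then round each subinstance. The only change is in the rounding step, where an improved version of Lemma~\ref{lem:tra} will deliver a distance-$23$ (rather than distance-$25$) solution under uniform capacities. So the whole argument reduces to replaying the proof of Lemma~\ref{lem:tra} and saving two units of distance.

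Concretely, I would retrace the construction of the auxiliary tree in the proof of Lemma~\ref{lem:tra}. The duplicate vertex $s'\in S'$ was introduced only because $m_s$, the largest-capacity neighbor in $N^2[s]\cap\F$, might differ from $s$ and have a larger capacity; this is what forced the final distance-$2$ transfer from $s'$ back to $m_s$. Under uniform capacities $L(s)=L(u)$ for every $u\in N^2[s]\cap\F$, so one may simply choose $m_s:=s$ and skip the duplication entirely. That is, I would build $T'$ directly on $\F\setminus\{m_s:s\in S\}=\F\setminus S$ together with $S$: the backbone is a tree on $S$ obtained by Lemma~\ref{lem:build-tree} (whose edges have $G$-length at most $10$), and each remaining $u\in\F\setminus S$ is attached to its closest vertex of $S$ (with $G$-length at most $4$, since $N^5[S]=\C\cup\F$ and $G$ is bipartite). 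All non-leaves of $T'$ then lie in $S$, and $d_{G}(u,v)\le 10\,d_{T'}(u,v)$ as before.

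Given a feasible $(x,y)$ of $LP_{k,p}(G,L,S)$, I would produce the integral transfer in two stages. First, constraint~\eqref{eq:near} lets me gather one unit of mass from $N^2[s]\cap\F$ onto $s$ for each $s\in S$; because all capacities in $N^2[s]\cap\F$ are equal, this is a legitimate distance-$2$ transfer $y'$ in $G$, and moreover $y'_s=1$ for every $s\in S$, i.e.\ for every non-leaf of $T'$, while $\sum_v y'_v=k\in\pint$. Second, Lemma~\ref{lem:tree} applied to $(T',L,y')$ produces an integral distance-$2$ transfer $F\subseteq V(T')\subseteq \F$ which, by Fact~\ref{fct:image} with $f(r)=10r$, is a distance-$20$ transfer of $(G,L,y')$. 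Composing via Fact~\ref{fct:comp} gives an integral distance-$22$ transfer of $(G,L,y)$, and Lemma~\ref{lem:app} turns it into a distance-$23$ solution. Summing over the subinstances produced by Lemma~\ref{lem:dp} yields a distance-$23$ solution for the whole graphic instance, and the thresholding reduction then gives the claimed $23$-approximation.

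No step is really an obstacle: the only thing one has to verify is that the simplified $T'$ and the vector $y'$ still meet the hypotheses of Lemma~\ref{lem:tree} (non-leaves get value $1$, total sum is integral) and that the first gathering step is a valid distance-$2$ transfer. Both follow immediately from uniformity of $L$ and from~\eqref{eq:near}, so the improvement is essentially bookkeeping on top of the proof of Theorem~\ref{thm:gen}.
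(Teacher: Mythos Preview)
Your proposal is correct and matches the paper's own argument essentially verbatim: the paper likewise sets $m_s=s$, drops the auxiliary vertices $s'$, and thereby eliminates the final distance-$2$ transfer from the composition, yielding an integral distance-$22$ transfer and hence a distance-$23$ solution via Lemma~\ref{lem:app}. The remaining steps (thresholding, skeleton enumeration, decomposition by Lemma~\ref{lem:dp}) are invoked identically.
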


\subsection{Uniform soft capacities}
While we could argue as for general soft capacities that in the case of uniform
soft capacities we have a 23-approximation algorithm, a tailor-made proof gives
much better factor.

It is easy to verify that the ingredients of the proof of
Theorem~\ref{thm:gen} be adapted to soft capacities with two changes:
\begin{itemize}
  \item instead of a set of open facilities, we consider a multiset,
  \item we drop the $y\le \mathbf{1}$ requirement in the LP.
\end{itemize}
Thus, in order to obtain an $r+1$-approximation algorithm it is enough to compute an integral (again, multisets
allowed) distance-$r$ transfer of $y$, where $(x,y)$ is the fractional solution
of the LP for an instance satisfying the conditions of Lemma~\ref{lem:tra}.

Again, we shall start with gathering value from $N^2[s]$ in $s$. This time we
are allowed to gather more than one unit in $s$, so we gather everything from
$N^2[s]$. A vector $y'$ defined this way clearly is a distance-2
transfer of $(G,L,y)$. Moreover, by~\eqref{eq:near} at least one unit is gathered
at each $s\in S$. Like in the proof of Lemma~\ref{lem:tra}, the second component
relies on the structure of $T$.
We connect each $v\in \F\sm S$ to the closest $s\in S$ obtaining a tree $T'$.
This way we have a tree on $\C$ whose non-leaves belong to $S$, and
such that $d_G(u,v)\le 10$ for any $\{u,v\}\in E(T')$.
We shall give an integral distance-1 transfer $y''$ of $(T,L,y')$.
Let us make $T'$ a rooted tree, setting the root at a vertex $r\in S$.
For each $v\in V(T)$ define $Y'_v$ as the sum of $y'_u$ over all descendants $u$ in the
subtree rooted at $v$. 
For each $v\in V(T')$ we transfer $\delta_v := Y'_v - \floor{Y'_v}$ units from $v$ to
its parent $p(v)$.  Note that $Y'_r$ is an integer, since $\sum_{v\in \F}
y'_v = k$, so $\delta_r=0$ and the operation is well defined. 
Observe that for every $v\in V(T')$ it holds that
$$y''_v = y'_v-\delta_v + \sum_{u\;:\:\text{child of }v}\delta_u = \floor{Y'_v}-\sum_{u\;:\; \text{child of }v} \floor{Y'_u}\in \pint.$$
 Also, for any vertex $v$ we have $\delta_v \le y'_v$. That is because for leaves
$\delta_v = Y'_v - \floor{Y'_v} \le Y'_v = y'_v$ and for the remaining vertices $\delta_v = Y'_v -
\floor{Y'_v} \le 1 \le y'_v$, since $v\in S$ so that $y'_v\ge 1$.
Consequently, for any $U\sub \F$, setting $U'=\{u:d_{T'}(u,U)\le 1\}$, we get
$$\sum_{v\in U'}y''_v = \sum_{v\in U'}\left(y'_v-\delta_v + \sum_{u\;:\:\text{child of }v}\delta_u\right)=
\sum_{v\in U'}(y'_v-\delta_v)+\sum_{u:p(u)\in U'}\delta_u \ge\sum_{v\in U}(y'_v-\delta_v)+\sum_{u\in U}\delta_u = \sum_{v\in U}y'_v,$$
since $0\le \delta_v\le y'_v$ for any $v\in \F$. Moreover $L(v)=L$ is a constant, so this inequality 
proves the condition \ref{it:nei}. of Definition~\ref{def:tra}, and thus $y''$ is indeed a distance-1 transfer of $(T,L,y')$.
By Fact~\ref{fct:image} this defines a distance-10 transfer of $(G,L,y')$,
which composed with the previous transfer using Fact~\ref{fct:comp} gives an
integral distance-12 transfer of $(G,L,y)$.
Consequently, repeating the proof of Theorem~\ref{thm:gen} we get the
following result.
\begin{theorem}
The \fullsup\  problem with uniform soft capacities admits a 13-approximation
algorithm.
\end{theorem}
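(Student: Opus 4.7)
The plan is to follow the framework of Theorem~\ref{thm:gen}, adapted as in this appendix so that we drop the LP constraint $y\le\mathbf{1}$ and allow opening a multiset of facilities. In this soft setting it suffices to construct an integral distance-$r$ transfer of the fractional LP solution $y$ to get a distance-$(r+1)$ approximation, via the obvious multiset variant of Lemma~\ref{lem:app}. My target is an integral distance-$12$ transfer, which combined with the thresholding reduction of Section~\ref{sec:thresholding} yields the claimed factor of $13$.

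I would build the transfer in two stages. First, for each $s\in S$ I sweep the entire $y$-mass from $N^2[s]$ into $s$ itself. Soft capacities let $y'_s$ exceed $1$, and uniform capacities make condition~\ref{it:nei} of Definition~\ref{def:tra} reduce to pure mass counting, so this is cleanly a distance-$2$ transfer; by~\eqref{eq:near} the resulting $y'$ has $y'_s\ge 1$ at every $s\in S$ (mass outside $\bigcup_{s\in S}N^2[s]$ stays put). Second, I build a rooted tree $T'$ on $\F$ whose internal vertices all lie in $S$: start from the tree $T$ on $S$ given by Lemma~\ref{lem:build-tree}, and attach every remaining vertex of $\F\sm S$ as a leaf to a nearest vertex of $S$. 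Adjacent vertices of $T'$ are at $G$-distance at most $10$. Rooting $T'$ at an arbitrary $r\in S$, I round $y'$ bottom-up: for every vertex $v$ let $Y'_v$ denote the sum of $y'$ over the subtree rooted at $v$, and push the fractional part $\delta_v:=Y'_v-\floor{Y'_v}$ from $v$ up to its parent. Because every internal node belongs to $S$ and so carries $y'\ge 1$, the push is always feasible, and the root ends up integral because $\sum_v y'_v=k\in\pint$. The resulting $y''$ is a distance-$1$ transfer on $T'$; by Fact~\ref{fct:image} it is a distance-$10$ transfer of $(G,L,y')$, and composing with the first stage via Fact~\ref{fct:comp} gives the desired integral distance-$12$ transfer of $(G,L,y)$.

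The main point to verify — and the only real obstacle — is that this bottom-up rounding is genuinely a distance-$1$ transfer on $T'$: for every $U\subseteq V(T')$ the total $y''$-mass on $N_{T'}[U]$ must dominate the $y'$-mass on $U$. This follows by a direct computation based on the telescoping identity $y''_v=y'_v-\delta_v+\sum_{u\text{ child of }v}\delta_u$ together with $0\le\delta_v\le y'_v$ for every $v\in V(T')$; the upper bound needs a brief case split, using $\delta_v\le 1\le y'_v$ at internal nodes and $\delta_v\le Y'_v=y'_v$ at leaves. Integrality of $y''$ at every vertex is then automatic from the identity. Once this is checked, the soft-capacity analogue of Lemma~\ref{lem:app} produces a distance-$13$ solution, and repeating the wrap-up of Theorem~\ref{thm:gen} finishes the proof.
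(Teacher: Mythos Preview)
Your proposal is correct and follows essentially the same route as the paper: gather all mass from each $N^2[s]$ into $s$ (a distance-$2$ transfer, using uniformity so that condition~\ref{it:nei} becomes pure mass counting), build the tree $T'$ on $\F$ from the tree of Lemma~\ref{lem:build-tree} plus leaves, and push fractional parts $\delta_v=Y'_v-\floor{Y'_v}$ to parents to obtain an integral distance-$1$ transfer on $T'$, hence a distance-$10$ transfer on $G$, hence a distance-$12$ transfer overall. Your verification sketch (the telescoping identity for $y''_v$ and the case split $\delta_v\le 1\le y'_v$ at internal nodes versus $\delta_v\le Y'_v=y'_v$ at leaves) matches the paper's argument.
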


\section{Equivalence of \textsc{Capacitated} $k$-\textsc{supplier with Outliers} and \textsc{Capacitated} $k$-\textsc{center with Outliers}}\label{app:equivalence}

\begin{theorem}
Assume there exists an $r$-approximation algorithm for \textsc{Capacitated} $k$-\textsc{center with Outliers}.
Then there exists an $r$-approximation algorithm for \textsc{Capacitated} $k$-\textsc{supplier with Outliers}.
\end{theorem}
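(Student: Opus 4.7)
The plan is to reduce the supplier instance $I=(\C,\F,d,L,k,p)$ to a center instance $I'$ and invoke the assumed $r$-approximation algorithm for $k$-center with outliers. I first apply the thresholding reduction of Section~\ref{sec:thresholding}, which lets me restrict attention to graphic instances: $d$ equals $d_G$ for a bipartite graph $G=(\C,\F,E)$, and a distance-$1$ solution is sought. I then set $V=\C\cup\F$, take $L'(u)=L(u)$ for $u\in\F$ and $L'(v)=0$ for $v\in\C\setminus\F$, and define the center instance $I' = (V, d_G, L', k, p)$.

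Any distance-$1$ supplier solution $(C,F,\phi)$ for $I$ is immediately a distance-$1$ center solution for $I'$: $F\subseteq\F\subseteq V$ retains its capacities, $C\subseteq\C\subseteq V$, and the distances are preserved. Consequently, whenever $I$ admits a distance-$1$ supplier solution, so does $I'$, and the $r$-approximation $k$-center algorithm applied to $I'$ must return a distance-$r$ solution $(C',F',\phi')$. Since $L'(v)=0$ for $v\in\C\setminus\F$, no client can be assigned to such facilities, so the image of $\phi'$ lies in $F'\cap\F$. I set $F = F'\cap\F$, padded to size $k$ with arbitrary elements of $\F\setminus F'$ (feasibility of $I$ gives $|\F|\ge k$), and initially take $C = C'\cap\C$.

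The main obstacle is that the center solution may serve some $u\in\F$ as a client, leaving $|C'\cap\C|<p$. For each such $u$ with $\phi'(u)=f\in F$ and $d_G(u,f)\le r$, bipartite parity forces $d_G(u,f)$ to be even; if $u\neq f$, a shortest $u$--$f$ path contains a $\C$-vertex $c$ with $d_G(c,f)\le r-1$, and if $u=f$ (self-service) then the WLOG assumption $\deg(u)\ge L(u)\ge 1$ from Section~\ref{sec:skeleton} supplies a $\C$-neighbor $c$ with $d_G(c,u)=1$. In either case I can try to swap $u$ for such a $c$, preserving the distance bound. When $c$ is already in $C'$, I reroute by a Hall/augmenting-path argument in the bipartite graph of $\C$-vertices at $G$-distance $\le r$ from $F$ with capacities $L$; the feasibility of $I$ ensures enough capacity for the reassignments to succeed.

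The technical heart of the proof is to execute these swaps consistently, producing in polynomial time a supplier solution $(C,F,\phi)$ with $|C|=p$, $|F|=k$, and $d_G(v,\phi(v))\le r$ for every $v\in C$. Combined with the thresholding reduction, this gives an $r$-approximation algorithm for $I$.
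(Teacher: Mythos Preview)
Your reduction has a genuine gap: the center instance $I'=(V,d_G,L',k,p)$ you build can have optimum strictly smaller than the supplier optimum, because vertices of $\F$ may serve themselves at cost $0$. The black-box $r$-approximation for $k$-center is then only obliged to return a solution of cost $r\cdot\mathrm{OPT}(I')$, and its facility set may be useless for the supplier problem; your ``Hall/augmenting-path'' rerouting need not succeed, since ``feasibility of $I$'' guarantees enough capacity for \emph{some} facility set, not for the one the center algorithm hands you.

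Concretely, take $\C=\{c_1,c_2\}$, $\F=\{f,g,h\}$, with $c_1$ adjacent to $f,g,h$ and $c_2$ adjacent only to $h$; set $L\equiv 1$, $k=2$, $p=2$. The supplier instance has a distance-$1$ solution (open $\{f,h\}$, assign $c_1\!\to\! f$, $c_2\!\to\! h$). In your center instance, however, $F'=C'=\{f,g\}$ with $\phi'(f)=f$, $\phi'(g)=g$ is a distance-$0$ solution, so $\mathrm{OPT}(I')=0$ and any $r$-approximation may legitimately output exactly this. Now $F'\cap\F=\{f,g\}$ with no room to pad, and $d_G(c_2,f)=d_G(c_2,g)=3$; for $r=2$ the only $\C$-vertex within distance $r$ of $\{f,g\}$ is $c_1$, so no distance-$2$ supplier solution with this facility set exists. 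Your rerouting step therefore fails.

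The paper avoids precisely this difficulty by \emph{duplicating} each client $N=|\F|+1$ times, scaling capacities and $p$ by $N$. Then at most $|\F|<N$ of the served clients can be facility-vertices, so in the deficit-Hall inequality the shortfall caused by them is strictly less than $1$; integrality of both sides then yields the Hall condition for matching $p$ genuine clients.
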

\begin{proof}
Let us consider an instance $I=(\C,\F,d,L, k,p)$ of \textsc{Capacitated} $k$-\textsc{supplier with Outliers}.
Define an instance $I'=(V',d',L', k',p')$ of \textsc{Capacitated} $k$-\textsc{center with Outliers} as
follows: take $V' = (\C\times\{1,\ldots,N\})\cup \F$ where $N=|F|+1$, and
for every $u\in F, v\in C, i\in\{1,\ldots,N\}$ set $d'((v,i),u)=d(v,u)$. Other values of $d'$ 
are taken as the symmetric, transitive closure of those determined explicitly (note that since $d$ was symmetric and satisfied triangle equality,
the closure does not modify any explicitly set value of $d'$).
Also, set $L'(v) = 0$ for $v\in \C\times\{1,\ldots,N\}$, $L'(u)=NL(u)$ for $u\in \F$, $k' = k$, and $p' = pN$.
Clearly $I'$ can be constructed in polynomial time from $I$.
Thus, it suffices to show that a distance-$r$ solution exists in $I$ if
and only if a distance $r$-solution exists in $I'$.

One direction is very simple: assume $\phi:C\to F$ is a distance-$r$ solution in $I$.
Observe that $\phi':(C\times \{1,\ldots,N\})\to F$ defined as $\phi'(v,i)=\phi(v)$
for $v\in C, i\in\{1,\ldots,N\}$ is a distance-$r$ solution in $I'$.

Now, let us prove the other implication. The construction is going
to be similar to the one in the proof of Lemma~\ref{lem:app}.
Assume $\phi':C'\to F$ is a distance-$r$
solution in $I'$. Note that $C'$ may contain vertices from $\F$.
Construct a bipartite graph $H=(\C, \F, E_H)$ with $(v,u)\in E_H$ if $d(v,u)\le r$,
and modify $H$ to obtain $H'$ by removing vertices from $\F\sm F$ and multiplicating each $u\in F$
to its capacity, i.e.\ $L(u)$ times.
Note that $|F|=k'=k$, so  a cardinality-$p$ matching in $H'$ gives a distance-$r$ solution to $I$.
Observe that for any $v\in C$ and $i\in\{1,\ldots,2n\}$, it holds that $d(\phi'(u,i),u)\le r$.
Consequently, for any $U\sub \C$
 we have the following inequality
\begin{multline*}
\sum_{u\in F: d(u,U)\le r} NL(u) \ge \left|\left(U\times\{1,\ldots,N\}\right)\cap C'\right|\ge |C'|-|\F|-\left|(\C\sm U)\times\{1,\ldots,N\}\right|\\
= Np - |\F|+N|U| - N|\C|> N(p+|U|-|\C|-1).
\end{multline*}
Therefore
$$\sum_{u\in F : d(u,U)\le r} L(u) > |U|-|\C|+p-1.$$
Both sides of this inequality are integral, which implies 
$$
\sum_{u\in F : d(u,U)\le r} L(u) \ge|U|-|\C|+p
$$
and, by the deficit version of Hall's theorem, also guarantees the existence of a cardinality $p$-matching in $H'$ and a distance $r$-solution to $I$.
\end{proof}

\section{Connected instance with arbitrarily large integrality gap}\label{app:example}
\begin{fact}
For arbitrarily large $r\in \pint$ there is a graphic instance $I=(G=(\C,\F,E),L,k,p)$ of \fullsup\ 
and a set $S\sub \F$, such that all conditions of Lemma~\ref{lem:tra} except (\ref{it:three}) are satisfied,
but $I$ does not have a distance-$r$ solution.
\end{fact}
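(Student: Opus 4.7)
The plan is to exhibit, for any prescribed $r$, a connected bipartite graphic instance $I$ together with a set $S\sub \F$ satisfying conditions (i), (ii), and (iv) of Lemma~\ref{lem:tra} but admitting no distance-$r$ solution. My first move is to take $S=\emptyset$: condition (ii) is then vacuous, and the skeleton-based constraint~(\ref{eq:near}) disappears from $LP_{k,p}(G,L,S)$. Condition (iv) therefore reduces to feasibility of the standard LP relaxation of \fullsup, so the task becomes exhibiting an instance for which this standard LP has arbitrarily large integrality gap -- which is exactly the obstacle that the skeleton machinery of Section~\ref{sec:clustering} was designed to overcome.

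To build the instance, I would join several identical ``gadgets'' by a long bipartite path of length exceeding $r$ through dummy vertices (e.g.\ facilities of capacity $0$), ensuring condition (i) while placing the gadgets at pairwise graph-distance greater than $r$. Inside each gadget I would place a small bipartite configuration that admits a fractional distance-$1$ LP solution whose $y$-mass is spread across several facilities, while any integer commitment of the locally available facility budget falls short of the per-gadget demand by one client even at distance $r$ within the gadget. Summing the per-gadget shortfall yields an overall integer deficit, whereas the fractional LP freely combines fractional openings across gadgets via the global budget $\sum y = k$ and the global flow target $\sum x = p$; observe that because distinct gadgets lie at graph-distance more than $r$ from each other, any distance-$r$ integer solution is forced to treat each gadget independently and therefore cannot make up for the local shortfall by borrowing facilities from a neighbouring gadget.

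The main obstacle is producing the gadget itself. I expect to adapt a classical non-uniform-capacity integrality-gap example for capacitated $k$-center (cf.\ \cite{chk-focs12}): a small bipartite structure in which the constraints $x_{uv}\le y_u$ and $\sum_v x_{uv}\le L(u)y_u$ force the LP to spread $y$-mass across just enough facilities to support the required flow on distance-$1$ edges, while any integer commitment of the prescribed facility budget leaves one client unreachable inside the gadget even when the distance threshold is relaxed to $r$ (whose scale is chosen much larger than the gadget's diameter). With such a gadget in hand, verifying constraints (\ref{lp:1})--(\ref{eq:forb}) is routine by exhibiting the fractional solution, and the non-existence of a distance-$r$ integer solution follows from the distance isolation of the gadgets together with the per-gadget integer deficit, establishing the fact for arbitrarily large $r$.
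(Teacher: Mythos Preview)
Your plan is sound and, at the architectural level, coincides with what the paper does: two identical ``gadgets'' joined by a long bipartite path so that a distance-$r$ solution cannot share facilities across gadgets, while the LP splits the opening mass evenly (the paper uses $y_{f_{i,j}}=\tfrac34$ on four facilities with $k=3$, $p=12N$, uniform capacity $4N$, and $6N$ clients per side). Two differences are worth noting. First, you take $S=\emptyset$, which makes condition~(ii) vacuous and strips constraint~\eqref{eq:near} from the LP; this yields a formally correct witness for the Fact, but a weaker one than the paper's. The paper deliberately chooses a nonempty $S=\{f_{1,1},f_{2,1}\}$ with $d_G(f_{1,1},f_{2,1})\ge 6$ so that \eqref{eq:near} is a genuine constraint, thereby demonstrating that even adding the skeleton constraint~\eqref{eq:near} to a connected instance does not bound the integrality gap without condition~(\ref{it:three}); your empty-$S$ example only shows that connectivity plus the bare LP is insufficient, which is a strictly weaker statement (and was essentially already known for capacitated $k$-center). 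Second, your gadget is left implicit (``adapt a classical non-uniform-capacity example''); in fact no non-uniformity is needed---the paper's example has uniform capacities---and the explicit arithmetic (each of the $k=3$ facilities must serve exactly $4N$ clients, so a side receiving $\ge 2$ facilities needs $\ge 8N$ clients in its $r$-neighbourhood, which contains at most $6.5N+1$) is short enough that writing it out is preferable to citing.
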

\begin{proof}
Assume $r\ge 2$ and fix $N=2r$.
Let $G$ consist of the following components (see also Figure~\ref{fig:ex}): a path of $N+1$ vertices with endpoints $c_1,c_2\in\C$ and inner vertices alternately in $\F$ and $\C$, four vertices $f_{i,j}\in F$ ($i,j\in\{1,2\})$), with $f_{i,j}$ adjacent to $c_i$,
and $12N$ vertices $c_{i,j}\in \C$ ($i\in\{1,2\}, j\in\{1,\ldots,6N\}$), with $c_{i,j}$ adjacent both to $f_{i,1}$ and $f_{i,2}$.
For each $u\in \F$ we set $L(u)=4N$, moreover $k=3$ and $p=12N$. The set $S$ is defined as $\{f_{1,1}, f_{2,1}\}$.

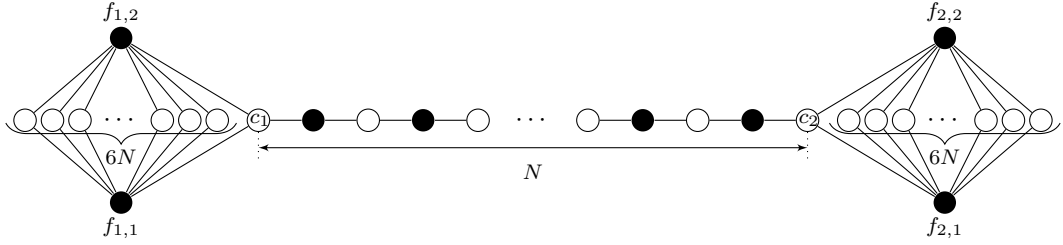
\begin{figure}
\begin{center}

\begin{tikzpicture}[scale=.73,every node/.style={circle, inner sep = 3, }]
\foreach \x in {1,2} {
 \foreach \y in {1,2} {
	\node[fill=black,circle] (f\x\y) at (\x*15,\y*3) {};
}
    \draw (\x*15,3)  node[below=-3] {\footnotesize{$f_{\x,1}$}};
    \draw (\x*15,6)  node[above=-3] {\footnotesize{$f_{\x,2}$}};

}
\foreach \x in {1,2} {
\draw[decorate, decoration={brace, amplitude=8}] (\x*15+2.1,4.45) -- node[below=2]{\footnotesize{$6N$}} ( \x*15-2.1,4.45);
\draw (\x*15, 4.5) node {$\cdots$};
\foreach \y in {0,1,2} {
		\node[draw=black,circle] (c\x\y) at (\x*15-0.75-\y/2,4.5) {};
		\draw (f\x1) -- (c\x\y) -- (f\x2);
}
\foreach \y in {0,1,2} {
		\node[draw=black,circle] (d\x\y) at (\x*15+0.75+\y/2,4.5) {};
		\draw (f\x1) -- (d\x\y) -- (f\x2);
}
}
\foreach \x in {1,2} {
\node[draw=black,circle] (c\x) at (7.5+\x*10, 4.5) {};
\draw (7.5  + \x*10, 4.5) node[left=-9.5] {\footnotesize{$c_{\x}$}};
\draw (f\x1) -- (c\x) -- (f\x2);
}

\foreach \x in {0,2,6,8} {
 \node[fill=black,circle] (a\x) at (18.5+\x, 4.5) {};	
}
\foreach \x in {1,3,5,7} {
 \node[draw=black,circle] (a\x) at (18.5+\x, 4.5) {};	
}
\draw (c1) -- (a0) -- (a1) -- (a2) -- (a3) ;
\draw (c2) -- (a8) -- (a7) -- (a6) -- (a5) ;
\draw (22.5,4.5) node {$\cdots$};

\draw[dotted] (c1) -- (17.5,3.8)  (c2) -- (27.5,3.8);
\draw[arrows={latex'-latex'}] (17.5, 4) -- node[below] {\footnotesize{$N$}} (27.5, 4);

\end{tikzpicture}
\end{center}
\caption{\label{fig:ex}
The graph $G$, vertices in $\C$ are marked as white circles, vertices in $\F$ as black circles.}

\end{figure}

Observe that an instance $I$ constructed this way satisfied conditions of Lemma~\ref{lem:tra} except (\ref{it:three}):
clearly $G$ is connected, $d_G(f_{1,1},f_{2,1})=N+2 \ge 6$. Consider a solution $(x,y)$ of $LP_{k,p}(G,L,S)$ with the following
non-zero coordinates: $y_{f_{i,j}}=\frac{3}{4}$ for $i,j\in\{1,2\}$, $x_{f_{i,j}c_{i,j'}}=\frac{1}{2}$ for $i,j\in\{1,2\}$, $j'\in\{1,\ldots,6N\}$.
It is easy to verify that it is a feasible solution.

It remains to show that $I$ does not have a distance-$r$ solution.
For a proof by contradiction, assume that is does, with $F\sub \F$ being the set of open facilities and $C\sub \C$
being the set of clients served.
Note that each $u\in F$ must serve $4N$ clients, since $p=4Nk$ and $L(u)=4N$ for $u\in \F$.
Let $\F_i = \{u\in \F : d(u,f_{i,1})\le r\}$ and $\C_i = \{v \in \C : d(v, \F_i)\le r\}$ for $i\in\{1,2\}$.
Observe that $\F=\F_1\cup \F_2$ and the sum is disjoint. Consequently $|F\cap \F_i| \ge 2$ and  $|C\cap \C_i|\ge 8N$ for some $i\in\{1,2\}$.
However, $\C_i$ does not contain $c_{3-i,j}$ for any $j\in\{1,\ldots,6N\}$, so $|\C_i|\le |\C|-6N=6.5N+1<8N$, a contradiction.
\end{proof}

\end{document}